\documentclass[a4paper,UKenglish]{lipics-v2018}

\usepackage{amsmath}
\usepackage{listings}

\lstset{
  basicstyle=\small\ttfamily,
  numbers=left, numberstyle=\tiny, numbersep=5pt,
  keywords={while,for,if,else,return,thread\_local,fun,class, bool,true,false,null,new,spawn,sync,const,double,time,
            int,template,atomic,and,or,spguard,core\_local,pfor,break},  
  morecomment=[l][\textit]{//},
  lineskip={-0pt},
  columns=[l]fullflexible,
  keepspaces=true,
  mathescape=true,
  escapechar=|,
  xleftmargin=2em
}

\let\origthelstnumber\thelstnumber
\makeatletter
\newcommand*\Suppressnumber{%
  \lst@AddToHook{OnNewLine}{%
    \let\thelstnumber\relax%
     \advance\c@lstnumber-\@ne\relax%
    }%
}

\newcommand*\Reactivatenumber{%
  \lst@AddToHook{OnNewLine}{%
   \let\thelstnumber\origthelstnumber%
   \advance\c@lstnumber\@ne\relax}%
}
\makeatother

\usepackage{wrapfig}
\usepackage{multicol}
\newcommand{\listingcaption}[1]%
{%
\refstepcounter{lstlisting}\hfill%
Listing \thelstlisting: #1\hfill
\vspace{1em}
}%

\usepackage{ifthen}
\usepackage{color}
\usepackage{amssymb}

\definecolor{gpcolor}{rgb}{0.6,0.2,0.3}
\newboolean{showcomments}
\setboolean{showcomments}{false}
\ifthenelse{\boolean{showcomments}}
{ \newcommand{\mynote}[3]{
    \fbox{\bfseries\sffamily\scriptsize#1}
    {\small$\blacktriangleright$\textsf{\emph{\color{#3}{#2}}}$\blacktriangleleft$}}
}
{
\newcommand{\mynote}[3]{}}
\newcommand{\pk}[1]{\mynote{Petr}{#1}{gpcolor}}

\newcommand{\remove}[1]{}

\usepackage{graphicx}

\usepackage{comment}

\hyphenation{op-tical net-works semi-conduc-tor}

\title{Parallel Combining: Benefits of Explicit Synchronization}
\titlerunning{Parallel Combining}

\author{Vitaly Aksenov}{ITMO University, Saint-Petersburg, Russia and Inria, Paris, France}{aksenov@corp.ifmo.ru}{}{}
\author{Petr Kuznetsov}{LTCI, T\'el\'ecom ParisTech, Universit\'e Paris-Saclay, Paris, France}{petr.kuznetsov@telecom-paristech.fr}{}{}
\author{Anatoly Shalyto}{ITMO University, Saint-Petersburg, Russia}{shalyto@mail.ifmo.ru}{}{}

\authorrunning{V. Aksenov, P. Kuznetsov and A. Shalyto}

\Copyright{V. Aksenov, P. Kuznetsov and A. Shalyto}

\subjclass{Computing methodologies $\rightarrow$   Concurrent
  computing methodologies, Parallel computing methodologies;Theory of computation $\rightarrow$ Models of computation
  $\rightarrow$ Concurrency $\rightarrow$ Distributed computing
  models, Parallel computing models}

\keywords{concurrent data structure, parallel batched data structure, combining}

\EventEditors{Jiannong Cao, Faith Ellen, Luis Rodrigues, and Bernardo Ferreira}
\EventNoEds{4}
\EventLongTitle{22st International Conference on Principles of Distributed Systems (OPODIS 2018)}
\EventShortTitle{OPODIS 2018}
\EventAcronym{OPODIS}
\EventYear{2018}
\EventDate{December 17--19, 2018}
\EventLocation{Hong Kong, China}
\EventLogo{}
\SeriesVolume{122}
\ArticleNo{0} 
\nolinenumbers 
\hideLIPIcs  

\begin{document}
\maketitle

\begin{abstract}
%
%
%
%

A \emph{parallel batched} data structure is designed to process
synchronized \emph{batches} of operations on the data structure
using a parallel program.
In this paper, we propose \emph{parallel combining}, a technique that
implements a \emph{concurrent} data structure
from a parallel batched one. 
The idea is that we explicitly synchronize concurrent operations
into batches: one of the processes becomes a \emph{combiner} which
collects concurrent requests 
and initiates a parallel batched algorithm
involving the owners (\emph{clients}) of the collected requests.
Intuitively, the cost of synchronizing the concurrent calls can be
compensated by running the parallel batched algorithm. 
%

We validate the intuition via  two applications. 
First, we use parallel combining to design
a concurrent data structure optimized for \emph{read-dominated}
workloads, 
taking a dynamic graph data structure as an example.
Second, we use a novel parallel batched \emph{priority queue} to
build a concurrent one. 
In both cases, we obtain performance gains with respect to
the state-of-the-art algorithms.  
\end{abstract}

\clearpage

\section{Introduction}

To ensure correctness of concurrent computations,  
various synchronization techniques are employed.
Informally, synchronization is used to handle conflicts on \emph{shared data}, e.g., resolving data races, or
\emph{shared resources}, e.g., allocating and deallocating memory.
%
Intuitively, the more sophisticated conflict patterns a concurrent program is
subject to---the higher are the incurred synchronization costs.  

Let us consider a concurrent-software class which we call
\emph{parallel programs}.
Provided an input, a parallel program aims at computing an output
that satisfies a \emph{specification}, i.e., an input-output relation (Figure~\ref{fig:parallel}). 
To boost performance, the program distributes the computation across
multiple parallel processes.  
Parallel programs are typically written for two environments:
for \emph{static multithreading} and \emph{dynamic multithreading}~\cite{thomas2009introduction}.
In \emph{static multithreading}, each process is given its own program and
these programs are written as a composition of \emph{supersteps}.
During a superstep, the processes perform conflict-free individual computations
and, when done, synchronize to accumulate the results.
In \emph{dynamic multithreading},
the program is written using dynamically called \emph{fork-join}
mechanisms (or similar ones, e.g., \texttt{\#pragma omp parallel} in
OpenMP~\cite{OpenMP}). 
In both settings, synchronization only appears in a specific form:
memory allocation/deallocation, and
aggregating superstep computations or
thread scheduling~\cite{frigo1998implementation}. 

\begin{figure}[h]
\centering
\begin{minipage}{0.48\textwidth}
  \centering
  \includegraphics[width=0.8\textwidth]{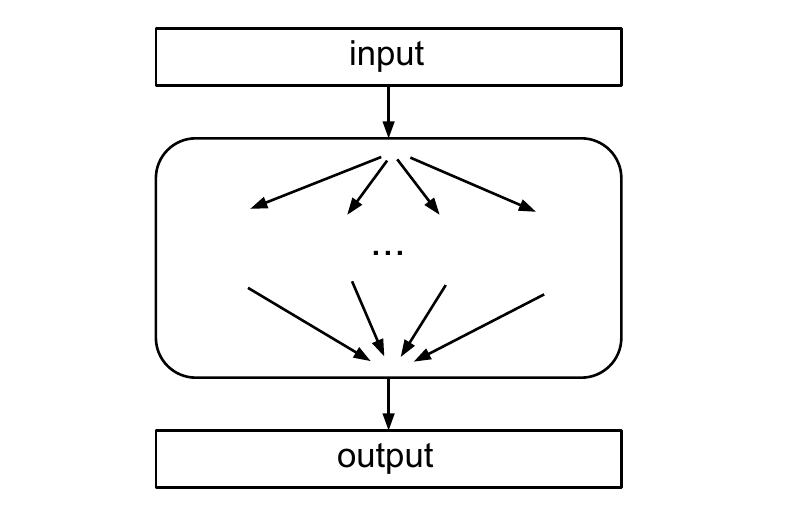}
  \caption{Execution of a parallel program}
  \label{fig:parallel}
\end{minipage}\hfill
\begin{minipage}{0.48\textwidth}
  \centering
  \includegraphics[width=0.8\textwidth]{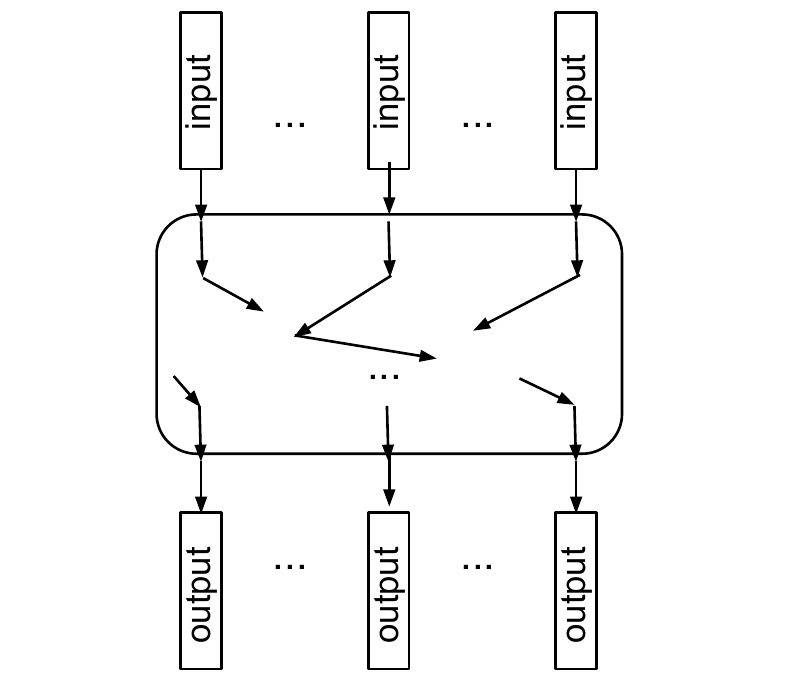}
  \caption{Execution on a concurrent data structure}
  \label{fig:concurrent}
\end{minipage}
\end{figure}

General-purpose \emph{concurrent data structures}, such as stacks,
binary search trees and priority queues, operate in a much less 
controlled environment.   
They are programmed to accept and process asynchronous operation
calls, which come from multiple concurrent processes and may interleave arbitrarily.
If we treat operation calls as inputs and their responses as outputs,
we can say that inputs and outputs are distributed across the processes
(Figure~\ref{fig:concurrent}).
It is typically expected that the interleaving operations
match the high-level sequential semantics of the data
type~\cite{HW90}, which is hard to implement efficiently given  
diverse and complicated data-race patterns often observed in
this kind of programs.
Therefore, designing efficient and correct concurrent data structures 
requires a lot of ingenuity from the programmer.
In particular, one should strive to provide the ``just right'' amount
of synchronization.    
\emph{Lock-based} data structures obviate data races by 
using fine-grained locking ensuring that contested data is
accessed in a mutually exclusive way.
\emph{Wait-free} and \emph{lock-free} data structures allow data races
but mitigate their effects by additional mechanisms, such
as \emph{helping} where one process may perform some work on behalf of other processes~\cite{herlihy2011art}.

As parallel programs are written for a restricted environment with
simple synchronization patterns, they are typically
easier to design than concurrent data structures.
%
In this paper, we suggest benefiting from this complexity gap by building concurrent data structures \emph{from}
parallel programs.
We describe a methodology of designing a concurrent data structure
from its \emph{parallel batched} counterpart~\cite{agrawal2014provably}.
A parallel batched data structure is a special case of a parallel program that accepts 
\emph{batches} (sets) of operations on a given sequential data type and executes them in a parallel way.
In our approach, we \emph{explicitly} synchronize concurrent operations,
assemble them into batches, and apply these batches on an \emph{emulated} parallel batched
data structure.

More precisely, concurrent processes share a set of \emph{active requests}
using any combining algorithm~\cite{oyama1999executing,hendler2010flat,
fatourou2011highly,fatourou2012revisiting}.
One of the processes with an active request becomes a \emph{combiner} and forms a
\emph{batch} from the requests in the set.
Under the coordination of the combiner, the owners of the collected requests, called \emph{clients},
apply the requests in the batch to the parallel batched data
structure.
As we show, this technique becomes handy when the overhead of \emph{explicitly
synchronizing} operation calls in batches
is compensated by the advantages of involving clients into the
computation using the parallel batched data structure.

\remove{
In the extreme case of ``concurrency-averse'' data structures, such as queues and stacks,
having control over the batch can be used to \emph{eliminate} certain requests
and bypass sequential bottlenecks,
even though there are no benefits of parallelizing the execution.\footnote{Note that sequential
combining~\cite{oyama1999executing,hendler2010flat,fatourou2011highly,fatourou2012revisiting,holt2013flat},
typically used in the concurrency-averse case, 
is a degenerate case of our parallel combining in which the combiner
applies the batch sequentially.} 
But as we show in the paper, combining \emph{and} parallel batching
pay off for data structures that offer some
degree of parallelism, such as dynamic graphs and priority queues.  
}

We discuss two applications of parallel combining and experimentally
validate performance gains.
First, we design concurrent implementations optimized for \emph{read-dominated} workloads
given a sequential data structure. Intuitively, updates are performed
sequentially and read-only operations are performed by the clients in parallel under the coordination of the combiner.
In our performance analysis, we considered a \emph{dynamic
  graph} data structure \cite{holm2001poly} that can be accessed for adding and
removing edges (updates), as well as for checking the connectivity between
pairs of vertices (read-only).
Second, we apply parallel combining to \emph{priority queue} that
is subject to sequential bottlenecks for minimal-element extractions,
while most insertions can be applied concurrently.
As a side contribution, we propose a novel parallel batched priority
queue, as no existing batched priority queue we are aware of can be
efficiently used in our context.
Our perfomance analysis shows that implementations based on parallel combining may outperform
state-of-the-art algorithms.

\subparagraph*{Structure.}
The rest of the paper is organized as follows.
In Section~\ref{sec:prel}, we give preliminary definitions.
In Section~\ref{sec:pc}, we outline parallel combining technique.
In Sections~\ref{sec:read-optimized} and~\ref{sec:priority-queue}, we present applications of our technique.
%
%
In Section~\ref{sec:exp}, we report on the outcomes of our performance analysis.
In Section~\ref{sec:related}, we overview the related work.
We conclude in Section~\ref{sec:discussion}.

\section{Background}
\label{sec:prel}

\subparagraph*{Data types and data structures.}
A sequential \emph{data type} is defined 
via a set of operations,
a set of responses, a set of states,
an initial state
and a set of transitions.
Each transition maps a state and an operation to a new state
and a response.
A \emph{sequential implementation} (or \emph{sequential data structure})
corresponding to a given data type
specifies, for each operation, a sequential read-write algorithm, so
that the specification of the data type is respected in every
sequential execution.   

We consider a system of $n$ asynchronous \emph{processes} (processors
or threads of computation) that communicate by 
performing primitive operations on shared \emph{base objects}. 
The primitive operations can be reads, writes, or 
\emph{conditional} operations, such as test\&set or compare\&swap.
A \emph{concurrent implementation} (or \emph{concurrent data
  structure}) of a given data type assigns, for each
process and each operation of the data type, a 
state
machine that is triggered whenever the process invokes an operation
and specifies the sequence of \emph{steps} (primitives on the base
objects) the process needs to perform to complete the operation.
We require the implementations to be \emph{linearizable}
with respect to the data type, 
i.e., we require that operations \emph{take effect} instantaneously within their
intervals~\cite{HW90}.


\subparagraph*{Batched data structures.}
A \emph{batched implementation} (or \emph{batched data structure})
of a data type exports one function \texttt{apply}.
This operation takes a \emph{batch} (set) of
data type operations as a parameter and returns responses for these
operations that are consistent with some sequential application of
the operations to the current state of the data structure, which is
updated accordingly. 
We also consider extensions of the definition where we
explicitly define the ``batched'' data type
via the \emph{set}~\cite{set-lin} or \emph{interval}~\cite{interval-lin}
linearizations.
Such a data type takes a batch and a state, and returns a new state and
a vector of responses.

For example, in the simplest form, a  batched implementation
may sequentially apply operations from a batch to
the sequential data structure.
But batched implementations may also use parallelism
to accelerate the execution of the batch:
we call these
\emph{parallel} batched implementations.
We consider two types of  parallel batched implementations:
\emph{static-multithreading} ones
and \emph{dynamic-multithreading} ones~\cite{thomas2009introduction}.
 
\subparagraph*{Static multithreading.} 
A parallel batched data structure specifies
a distinct (sequential) code to each process
in PRAM-like models (PRAM~\cite{jaja1992introduction},
Bulk synchronous parallel model~\cite{valiant1990bridging},
Asynchronous PRAM~\cite{gibbons1989more}, etc.).
For example, in this paper, we provide
a batched implementation of a priority queue in
the Asynchronous PRAM model.
The Asynchronous PRAM consists of $n$ sequential processes,
each with its own private local memory, communicating
through the shared memory. Each process has its own program.
Unlike the classical PRAM model, each process executes its instructions independently
of the timing of the other processors. Each process performs one of the
four types of instructions per tick of its local clock:
global read, global write, local operation, or
synchronization step. A synchronization step for a set $S$ of processes
is a logical point where each processor in $S$ waits for all the processes in $S$
to arrive before continuing its local program.

\subparagraph*{Dynamic multithreading.}
Here the parallel batched implementation is written as a sequential read-write algorithm
using concurrency keywords specifiying logical parallelism,
such as \emph{fork}, \emph{join} and \emph{parallel-for} \cite{thomas2009introducction}.
An execution of a batch can be presented as a directed acyclic graph (\emph{DAG}) that unfolds dynamically. 
In the DAG, nodes represent unit-time sequential subcomputations, and 
edges represent control-flow dependencies between nodes. A node that corresponds
to a ``fork'' has two or more outgoing edges and a node that corresponds to a ``join'' has two
or more incoming edges.
The batch is executed using a \emph{scheduler} that chooses which
DAG nodes to execute on each process.
It can only execute \emph{ready} nodes: not yet executed nodes whose predecessors have all been
executed.
The most commonly used \emph{work-stealing} scheduler
(e.g.,~\cite{blumofe1999scheduling}) operates as follows.
Each process $p$ is provided with a deque for ready nodes.
When process $p$ completes node $u$, it traverses successors of $u$ and collects
the ready ones. Then $p$ selects one of the ready successors for execution and
adds the remaining ready successors at the bottom of its deque. When $p$'s deque is empty, it becomes
a \emph{thief}: it randomly picks a victim processor and steals from the top of the victim's
deque.

\section{Parallel Combining}
\label{sec:pc}

%
%
In this section, we describe the \emph{parallel combining} technique
in a parameterized form:
the parameters are specified depending on the application.
We then discuss how to use the technique in transforming parallel
batched programs into concurrent data structures. 

\subsection{Combining Data Structure}

Our technique relies on a \emph{combining} data structure
$\mathbb{C}$ (e.g., the one used in~\cite{hendler2010flat}) that
maintains a set of requests to a data structure
and determines which process 
is a combiner.
If the set of requests is not empty then
exactly one process should be a combiner.

Elements stored in $\mathbb{C}$ are of \texttt{Request} type
consisting of the following fields:
1)~the method to be called and its input;
2)~the response field;
3)~the status of the request with a value in an application-specific \texttt{STATUS\_SET};
4)~application-specific auxiliary fields.
In our applications, \texttt{STATUS\_SET} contains, by default,
values \texttt{INITIAL} and \texttt{FINISHED}: \texttt{INITIAL}
meaning that the request is in the initial state, and 
\texttt{FINISHED} meaning that the request is served.

$\mathbb{C}$ supports three operations:
1)~\texttt{addRequest(r : Request)} inserts request \texttt{r}
  into the set, and the response indicates whether the calling process becomes a combiner
  or a client;
2)~\texttt{getRequests()} returns a non-empty set of requests; and
3)~\texttt{release()} is issued by the combiner to make
  $\mathbb{C}$ find another process to be a combiner.

In the following, we use any black-box implementation of
$\mathbb{C}$ providing this functionality~\cite{oyama1999executing,hendler2010flat,fatourou2011highly,fatourou2012revisiting}.

\subsection{Specifying Parameters}
To perform an operation, a process executes the following steps
(Figure~\ref{listing:combining}):
1)~it prepares a request, inserts the request into $\mathbb{C}$ using \texttt{addRequest($\cdot$)};
2)~if the process becomes the combiner (i.e.,
\texttt{addRequest}($\cdot$) returned $\textit{true}$), 
it collects requests from $\mathbb{C}$ using \texttt{getRequests()},
then it executes algorithm \texttt{COMBINER\_CODE},
and, finally, 
it calls \texttt{release()} to enable another active process to become a combiner;
3)~if the process is a client (i.e.,
\texttt{addRequest}($\cdot$) returned $\textit{false}$),
  it waits until the status of the request
becomes not \texttt{INITIAL} and, then, executes algorithm \texttt{CLIENT\_CODE}.

\begin{figure}
\begin{lstlisting}[
%caption={Parallel combining: pseudocode},
%captionpos=b,
%label=listing:combining,
%multicols=2
]
Request:
  method
  input
  res
  status $\in$ STATUS_SET
  ...

$\textbf{execute(method, input)}$:
  req $\leftarrow$ new Request()       |\label{line:combining:init:start}|
  req.method $\leftarrow$ method
  req.input $\leftarrow$ input
  req.status $\leftarrow$ INITIAL   |\label{line:combining:init:end}|
  if  $\mathbb{C}.$addRequest(req): 
    // combiner
    $A \leftarrow $ $\mathbb{C}.$getRequests() 
    COMBINER_CODE
    $\mathbb{C}$.release()
  else:
    while req.status $=$ INITIAL:
      nop
    CLIENT_CODE 
  return
\end{lstlisting}
\caption{Parallel combining: pseudocode}
\label{listing:combining}
\end{figure}

To use our technique, one should therefore specify 
\texttt{COMBINER\_CODE},  
\texttt{CLIENT\_CODE},
and appropriately modify \texttt{Request} type and \texttt{STATUS\_SET}.

Note that \emph{sequential combining}~\cite{oyama1999executing,hendler2010flat,fatourou2012revisiting,drachsler2014lcd}
is a special case of parallel combining in which all the work is done
by the combiner, and the client code is empty.

\subsection{Parallel Batched Algorithms}
\label{subsec:batch}
We discuss how to build a concurrent data structure
given a parallel batched one in one of two forms: for static or
dynamic multithreading.

In the static multithreading case, each process is provided with a distinct version
of \texttt{apply} function.
%
%
We enrich \texttt{STATUS\_SET} with \texttt{STARTED}.
In \texttt{COMBINER\_CODE}, the combiner collects the requests,
sets their status to \texttt{STARTED},
performs the code of \texttt{apply}
and waits for the clients to become \texttt{FINISHED}.
In \texttt{CLIENT\_CODE} the client waits until its request has
\texttt{STARTED} status, performs
the code of \texttt{apply} and sets the status of its request to \texttt{FINISHED}.

Suppose that we are given a
parallel batched implementation for dynamic multithreading.
One can turn it into a concurrent one
using parallel combining with the \emph{work-stealing} scheduler.  
Again, we enrich \texttt{STATUS\_SET} with \texttt{STARTED}.
In \texttt{COMBINER\_CODE}, the combiner collects the requests
and sets their status to \texttt{STARTED}.
Then the combiner creates a working deque, puts there 
a new node of computational DAG with \texttt{apply} function
and starts the work-stealing routine on processes-clients.
Finally, the combiner waits for the clients to become \texttt{FINISHED}.
In \texttt{CLIENT\_CODE}, the client creates a working deque
and starts the work-stealing routine.

In Section~\ref{sec:priority-queue}, we illustrate the use of parallel combining and parallel batched
programs on the example of a priority queue.


\section{Read-Optimized Concurrent Data Structures}
\label{sec:read-optimized}
Before discussing parallel batched algorithms, let us consider a
natural application of parallel combining: data structures 
optimized for \emph{read-dominated} workloads.
%

Suppose that we are given a sequential data structure $D$ that supports
\emph{read-only} (not modifying the data structure) operations, the
remaining operations  are called 
\emph{updates}.
%
We assume a scenario where read-only operations
dominate over other updates.
%

Now we explain how to set parameters of parallel combining for this application.
At first, \texttt{STATUS\_SET} consists of three elements \texttt{INITIAL}, \texttt{STARTED} and \texttt{FINISHED}.
\texttt{Request} type does not have auxiliary fields.

\begin{figure}[t]
\begin{lstlisting}[
%caption={Parallel combining applied to read-optimized data structures: COMBINER\_CODE},
%captionpos=b,
%label=listing:rw,
%multicols=2
]
COMBINER_CODE:  |\label{line:combiner}|
  R $\leftarrow \emptyset$ |\label{line:combiner:split:start}|

  for r $\in$ A:
    if isUpdate(r.method):
      apply(D, r.method, r.input)
      r.status $\leftarrow$ FINISHED
    else:
      R $\leftarrow$ R $\cup$ r |\label{line:combiner:split:end}|

  for r $\in$ R:                      |\label{line:combiner:started:start}|
    r.status $\leftarrow$ STARTED    |\label{line:combiner:started:end}|
  if req.status = STARTED:              |\label{line:combiner:read:check}|
    apply(D, req.method, req.input)       |\label{line:combiner:read:start}|
    req.status $\leftarrow$ FINISHED   |\label{line:combiner:read:end}|

  for $r \in R$:                   |\label{line:combiner:finished:start}|
    while r.status $=$ STARTED:  
      nop                          |\label{line:combiner:end}|

CLIENT_CODE:  |\label{line:client}|
  if not isUpdate(req.method):
    apply(D, req.method, req.input)        |\label{line:client:execute}|
    req.status $\leftarrow$ FINISHED    |\label{line:client:end}|
\end{lstlisting}
\caption{Parallel combining in application to read-optimized data structures}
\label{listing:rw}
\end{figure}

In \texttt{COMBINER\_CODE} (Figure~\ref{listing:rw} Lines~\ref{line:combiner}-\ref{line:combiner:end}),
the combiner 
iterates through the set of collected requests $A$: if a request contains an update operation then
the combiner executes it and sets its status to \texttt{FINISHED};
otherwise, the combiner adds the request to set $R$.
Then the combiner sets the status of requests in $R$ to \texttt{STARTED}.
After that the combiner checks whether its own request is read-only.
If so, it executes the method and sets the status of its request to \texttt{FINISHED}.
Finally, the combiner waits until the status of the requests in $R$
become \texttt{FINISHED}.
                         
In \texttt{CLIENT\_CODE} (Figure~\ref{listing:rw} Lines~\ref{line:client}-\ref{line:client:end}),
the client checks whether its method is read-only.
If so, the client executes the method and sets
the status of the request to \texttt{FINISHED}.


\begin{theorem}
\label{theorem:rw}
Algorithm in Figure~\ref{listing:rw} produces a
linearizable concurrent data structure from a sequential one.
\end{theorem}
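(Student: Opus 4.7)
The plan is to exhibit a linearization of every execution of the algorithm and verify both that linearization points lie within their corresponding operation intervals and that the resulting sequential history is consistent with the data type specified by $D$.

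First I would fix an arbitrary execution $E$ produced by the algorithm and partition the completed (and some pending) operations into \emph{batches} according to the combiner that served them: a batch consists of the set $A$ returned by the combiner's $\mathbb{C}.\texttt{getRequests}()$ call. By the assumed specification of the combining object $\mathbb{C}$, these batches are totally ordered (consecutive combiners are sequentialized by $\mathbb{C}.\texttt{release}()$ and the next $\mathbb{C}.\texttt{addRequest}$/$\mathbb{C}.\texttt{getRequests}$), and every request that has completed appears in exactly one batch. Within a batch, I would further split the requests into $U$ (updates) and $R$ (read-only), following the combiner's own split on Lines~\ref{line:combiner:split:start}-\ref{line:combiner:split:end}.

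Next I would define linearization points. For each update $r \in U$, the point is the moment when the combiner performs \texttt{apply}$(D, r.\texttt{method}, r.\texttt{input})$ on Line~\ref{line:combiner:split:end}'s branch, taken in the order the combiner iterates through $A$. For each read-only $r \in R$, the point is the moment the combiner sets $r.\texttt{status} \leftarrow \texttt{STARTED}$ on Line~\ref{line:combiner:started:end}. Ordering all points across batches respects the batch order enforced by $\mathbb{C}$. I would then check that each point lies within the call interval of its operation: the request is inserted into $\mathbb{C}$ before \texttt{addRequest} returns (Lines~\ref{line:combining:init:start}-\ref{line:combining:init:end}), so all chosen points occur after invocation; and, because every client spins until it observes status $\neq \texttt{INITIAL}$ (and then $= \texttt{FINISHED}$), the operation cannot return until after its \texttt{apply} call has completed, which for reads happens on Line~\ref{line:client:execute} or Line~\ref{line:combiner:read:start} and for updates on the combiner's update branch, in every case after the chosen linearization point.

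The core correctness step is showing that the sequential history induced by this order agrees with $D$'s data type. The key invariant I would prove is that during the interval in which clients of a batch execute \texttt{CLIENT\_CODE}, the state of $D$ is frozen and equal to the result of applying all previous batches plus the updates of the current batch, in linearization order. This holds because: (i) the combiner of the current batch applies all $U$-requests before setting any status to \texttt{STARTED}, so no client of the current batch touches $D$ until every update is done; (ii) no new combiner can begin work until $\mathbb{C}.\texttt{release}()$ is called, which happens only after the combiner has observed all clients of its batch transition to \texttt{FINISHED}; and (iii) read-only operations do not modify $D$, so concurrent clients see identical states. Consequently each read returns exactly what a sequential application of the linearized prefix would return, and each update, being executed on $D$ sequentially by the combiner in the chosen order, behaves correctly by the assumption that $D$ is a correct sequential implementation.

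The main obstacle is the invariant in (i)–(iii): I must argue carefully that no client of batch $k$ can access $D$ before all updates of batch $k$ are applied, and that no operation from batch $k{+}1$ starts touching $D$ before every client of batch $k$ has set its status to \texttt{FINISHED}. Point (i) follows from the status discipline, since a client cannot proceed past its spin loop until its status leaves \texttt{INITIAL}, which is only set to \texttt{STARTED} after the combiner's update loop. Point (ii) is where the semantics of $\mathbb{C}$ enters: since the previous combiner calls \texttt{release}() only after observing all $R$-statuses equal to \texttt{FINISHED}, and $\mathbb{C}$ guarantees that no new combiner is chosen before \texttt{release}, the batches are serialized at $D$. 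Once that invariant is nailed down, linearizability follows from the standard argument that the chosen points form a legal sequential history consistent with $D$.
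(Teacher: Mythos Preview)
Your proposal is correct and follows essentially the same approach as the paper: you choose the same linearization points (updates at the combiner's \texttt{apply}, reads at the moment the combiner sets \texttt{STARTED}) and justify them via the serialization of combining phases. Your write-up is simply more explicit than the paper's about why the points lie inside operation intervals and why $D$ is frozen while clients execute their read-only code.
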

\begin{proof}
Any execution of the algorithm can be viewed as a series of non-overlapping combining phases
(Figure~\ref{listing:rw}, Lines~\ref{line:combiner:split:start}-\ref{line:combiner:end}).
We can group the operations into batches by
the combining phase in which they are applied.

%
Each update operation is linearized at the point when
the combiner applies this operation.
Note that this is a correct linearization since all operations that are linearized
before are already applied: the operations from preceding combining phases
were applied during the preceding phases, while the operations
from the current combining phase are applied sequentially by the combiner.

%
Each read-only operation is linearized at the point
when the combiner sets the status of the corresponding request
to \texttt{STARTED}.
By the algorithm, a read-only operation observes all update operations
that are applied before and during the current combining phase.
Thus, the chosen linearization is correct.
\end{proof}
 
To evaluate the approach in practice we implement a concurrent \emph{dynamic graph} data structure
by Holm et al. and execute it in read-dominated environments~\cite{holm2001poly} (Section~\ref{sec:dynamic:graph}).

\section{Priority Queue}
\label{sec:priority-queue}

\emph{Priority queue} is an abstract data type that maintains
an ordered multiset and supports two operations:
\begin{itemize}
\item \texttt{Insert}($v$)~--- inserts value $v$ into the set;
\item $v \leftarrow$ \texttt{ExtractMin}()~--- extracts the smallest value from the set.
\end{itemize}

To the best of our knowledge, no prior parallel implementation
of a priority-queue~\cite{pinotti1991parallel,deo1992parallel,brodal1998parallel,sanders1998randomized}
can be efficiently used in our context: their complexity inherently
depends on the total number of processes in the system, regardless of
the actual batch size.
We therefore introduce a novel heap-based parallel batched priority-queue implementation
in a form of \texttt{COMBINER\_CODE} and \texttt{CLIENT\_CODE}
convenient for parallel combining.
The concurrent priority queue is then derived from the described below
parallel batched one using the approach presented in Section~\ref{subsec:batch}.  

Here we give only the brief overview of our parallel batched algorithm.
%
Please refer to Appendix~\ref{sec:priority} for a detailed description.

In Section~\ref{sec:exp:priority-queue}, we show that the resulting
concurrent priority queue is able to outperform manually crafted
state-of-the-art implementations.

\subsection{Sequential Binary Heap}

Our batched priority queue
is based on the \emph{sequential binary heap}
by Gonnet and Munro \cite{gonnet1986heaps},
one of the simplest and fastest sequential priority queues.
We briefly describe this algorithm below.

A binary heap of size $m$ is represented as a complete binary tree
with nodes indexed by $1, \ldots, m$.
Each node $v$ has at most two children: $2v$ and $2v + 1$ (to exist, $2v$ and $2v + 1$
should be less than or equal to $m$).
For each node, the \emph{heap property} should be satified:
the value stored at the node is less than the values stored at its children.

The heap is represented with \emph{size} $m$ and an array $a$
where $a[v]$ is the value at node $v$.
Operations \texttt{ExtractMin} and \texttt{Insert} are performed as follows:
\begin{itemize}
\item \texttt{ExtractMin} records the value $a[1]$ as a response,
  copies $a[m]$ to $a[1]$,
  decrements $m$ and performs the \textit{sift down} procedure to restore the heap property.
  Starting from the root, for each node $v$ on the path,
  we check whether value $a[v]$ is less than values
  $a[2v]$ and $a[2v + 1]$. If so, then the heap property is satisfied and
  we stop the operation.
  Otherwise, we choose the child $c$, either $2v$ or $2v + 1$, with the smallest value, swap values $a[v]$
  and $a[c]$, and continue with $c$.
\item \texttt{Insert($x$)} initializes a variable $val$ to $x$, increments $m$
  and traverses the path from the root to a new node $m$.
  For each node $v$ on the path, if $val<a[v]$, then the two values
  are swapped.
  Then the operation continues with the child of $v$ that lies on the path
  from $v$ to node $m$.
  Reaching node $m$ the operation sets its value to $val$.
\end{itemize}
The complexity is $O(\log m)$ steps per operation.

\subsection{Setup}
The heap is defined by its size $m$ and an array $a$ of Node objects.
Node object has two fields: value $val$ and boolean $locked$ (In Appendix~\ref{sec:priority} it has an additional field).

\texttt{STATUS\_SET} consists of three items: \texttt{INITIAL}, \texttt{SIFT} and \texttt{FINISHED}.

A Request object consists of: a method $method$ to be called and its input argument $v$;
a result $res$ field; a $status$ field and a node identifier $start$.

\subsection{ExtractMin Phase}

\subparagraph*{Combiner: ExtractMin preparation}
The combiner withdraws requests $A$ from combining data structure $\mathbb{C}$.
It splits $A$ into sets $E$ and $I$: the set of ExtractMin requests and
Insert requests.
Then it finds $|E|$ nodes $v_1, \ldots, v_{|E|}$ of heap with
the smallest values using the Dijkstra-like algorithm in $O(|E| \cdot \log |E|)$ steps:
(i)~create a heap of nodes ordered by values,
put there the root $1$;
(ii)~at each of the next $|E|$ steps withdraw the node $v$ with the
minimal value from the heap;
(iii)~put two children of $v$, $2v$ and $2v+1$, to the heap.
The $|E|$ withdrawn nodes are the nodes with the $|E|$ minimal
values.
For each request $E[i]$, the combiner sets $E[i].res$ to $a[v_i].val$, $a[v_i].locked$ to \texttt{true},
and $E[i].start$ to $v_i$.

The combiner proceeds by pairing Insert requests in $I$
with ExtractMin requests in $E$ using the following procedure.
Suppose that $\ell = \min(|E|, |I|)$.
For each $i \in [1, \ell]$, the combiner sets $a[v_i].val$ to $I[i].v$
and $I[i].status$ to \texttt{FINISHED},
i.e., this Insert request becomes completed.
Then, for each $i \in [\ell + 1, |E|]$, the combiner sets $a[v_i].val$ to the value of the last node $a[m]$ and
decrements $m$, as in the sequential algorithm.
Finally, the combiner sets the status of all requests in $E$ to \texttt{SIFT}.

\subparagraph*{Clients: ExtractMin phase}
Briefly, the clients sift down the values in nodes $v_1, \ldots, v_{|E|}$
in parallel using hand-over-hand locking:
the $locked$ field of a node is set whenever there is a \emph{sift down} operation
working on that node.

A client $c$ waits until the status of its request becomes \texttt{SIFT}.
$c$ starts sifting down from $req.start$.
Suppose that $c$ is currently at node $v$.
$c$ waits until the $locked$ fields of the children become \texttt{false}.
If $a[v].val$, the value of $v$, is less than the values
in its children, then \emph{sift down} is finished:
$c$ unsets $a[v].locked$ and sets the status of its request to \texttt{FINISHED}.
Otherwise, let $w$ be the child with the smallest value.
Then $c$ swaps $a[v].val$ and $a[w].val$, sets $a[w].locked$,
unsets $a[v].locked$ and continues with node $w$.

If the request of the combiner is ExtractMin, it also runs the code above as a client.
The combiner considers the ExtractMin phase completed when all requests in $E$ have status \texttt{FINISHED}.

\subsection{Insert Phase}
For simplicity, we describe first the sequential algorithm.

At first, the combiner removes all completed requests from $I$.
Then it initializes new nodes $m + 1, \ldots, m + |I|$ which we call
\emph{target nodes} and increments $m$ by $|I|$.
The nodes for which the subtrees of both children
contain at least one target node are called \emph{split nodes}.
(See Figure~\ref{fig:tree-short} for an example of how target and split
nodes can be defined.)

\begin{wrapfigure}{L}{0.5\textwidth}
\begin{center}
\includegraphics[width=6cm]{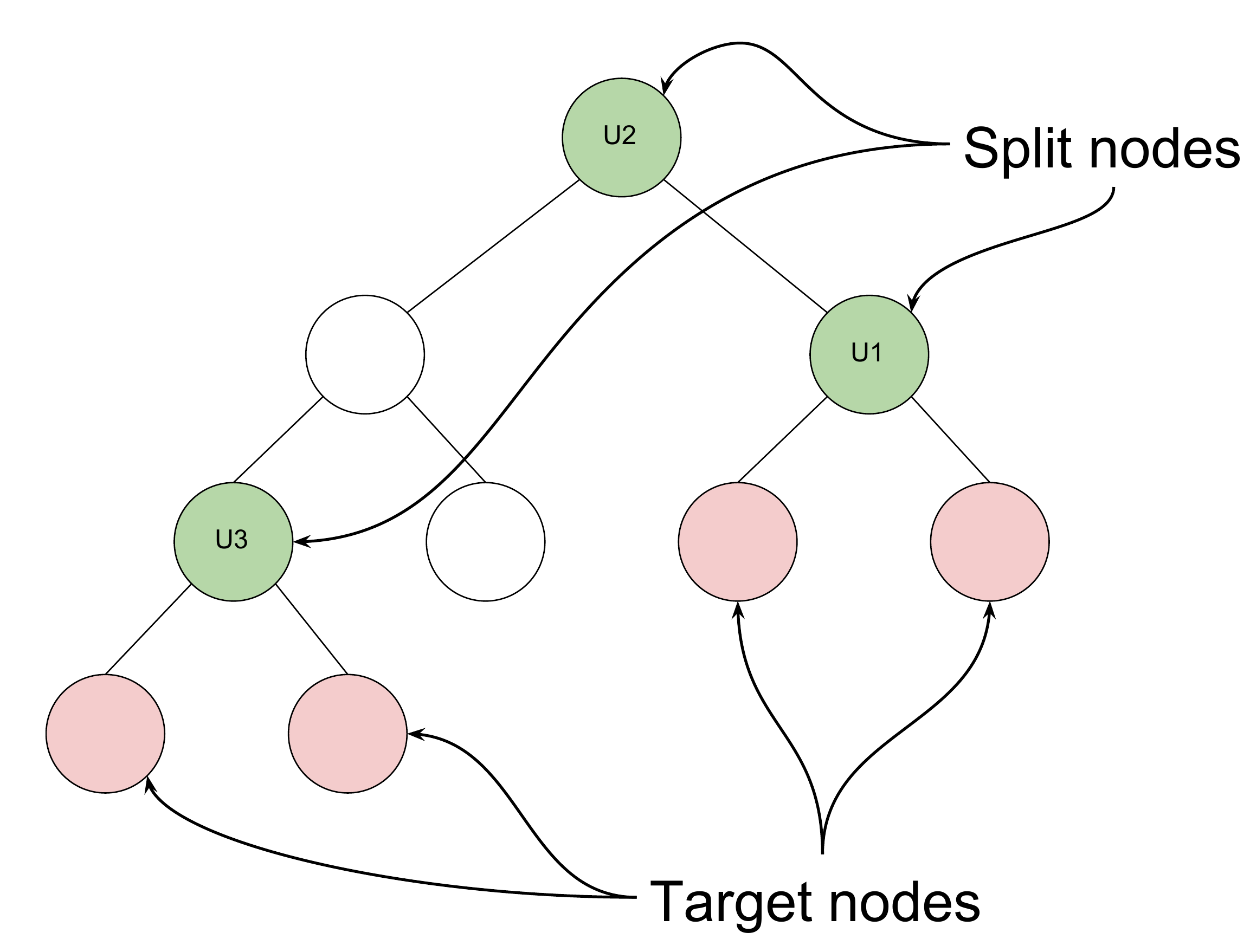}
\end{center}
\caption{Split and target nodes}
\label{fig:tree-short}
\end{wrapfigure}

The combiner collects the values of the remaining Insert requests and sorts them: $r_1, \ldots, r_{|I|}$.
Then it sets the status of these requests to \texttt{FINISHED}.

Now, we introduce InsertSet class: it consists of two sorted lists $A$ and $B$.
The combiner starts the following recursive procedure
at the root with InsertSet $s$: $s.A$ contains $r_1, \ldots, r_{|I|}$ while
$s.B$ is empty.
Suppose that the procedure is called on node $v$ and InsertSet $s$.
Let $min$ be the minimum out of the first element of $s.A$ and the first element of $s.B$.
If $v$ is a target node then the combiner sets $a[v].res$ to $min$ and withdraws $m$
from the corresponding list.
Otherwise, the combiner compares $a[v].res$ with $min$:
if $a[v].res$ is smaller, then it does nothing; otherwise, it appends $a[v].res$ to the end of $s.B$
(note that $s.B$ remains sorted because $s.B$ consists only of values that were ancestors in the heap),
withdraws $min$ from the corresponding list and sets $a[v].res$ to $min$.

If $v$ is not the split node the combiner calls the recursive procedure on the child with
target nodes in the subtree and with InsertSet $s$.
Otherwise, the combiner calculates $inL$ and $inR$~--- the number of target nodes in the left and
right subtrees of $v$.
Suppose, for simplicity, that $inL$ is less than $inR$ (the opposite case can be resolved similarly).
The combiner splits $s$ into two parts: create InsertSet $s_L$, move $\min(inL, |s.A|)$ first values from $s.A$ to $s_L.A$
and move $\min(inL - |s_L.A|, |s.B|)$ first values from $s.B$ to $s_L.B$.
Finally, it calls the recursive procedure on the left child with InsertSet $s_L$ and
on the right child with InsertSet $s$.

This algorithm works in $O(\log m + c \log c)$ steps (and can be optimized to $O(\log m + c)$ steps),
where $m$ is the size of the queue and $c$ is the number of Insert requests to apply.
Note that this algorithm is almost non-parallelizable due to its small complexity,
and our parallel algorithm is only developed to reduce constant factors.

%
Now, we construct a parallel algorithm for the Insert phase.
We enrich Node object with the IntegerSet field $split$.
The combiner sets the $start$ field of the first client ($i[1].start$) to the root $1$, while $start$ fields of other clients
to the right children of split nodes (we have exactly $|I| - 1$ split nodes).
Then it intializes the $split$ field of the root as the IntegerSet $s$ is initialized at the beginning of the sequential algorithm:
list $A$ contains values of requests while list $B$ is empty.

Each client waits until the $split$ field of the corresponding $start$ node is non-null.
Then it reads this IntegerSet: the values from this set should be inserted in the subtree.
Finally, the client performs the procedure similar to the recursive procedure from the sequential algorithm except for
one difference: when it reaches a split node instead of going recursively to left and right children,
it splits InsertSet to $s_L$ and $s_R$ of sizes $inL$ and $inR$,
puts $s_R$ into the $split$ field of the right child (in order to wake another client) and continues
with the left child and $s_L$. 

For further details about the parallel algorithm we refer to Appendix~\ref{sec:priority}.

\section{Experiments}
\label{sec:exp}
We evaluate Java implementations of our data structures 
on a 4-processor AMD Opteron 6378 2.4 GHz server with
16 threads per processor (yielding 64 threads in total), 512 Gb of RAM, running
Ubuntu 14.04.5 with Java 1.8.0\_111-b14 and HotSpot JVM 25.111-b14.


\subsection{Concurrent Dynamic Graph}
\label{sec:dynamic:graph}
To illustrate how parallel combining can be used to construct
read-optimized concurrent data structures,
we took the sequential dynamic graph implementation
by Holm et al.~\cite{holm2001poly}.
This data structure supports two update methods:
an insertion of an edge and a deletion of an edge;
and one read-only method:
a connectivity query that tests whether two vertices are
connected.

We compare our implementation based on parallel combining (PC)
with \emph{flat combining}~\cite{hendler2010flat} as a combining data structure
against three others:
(1)~Lock, based on ReentrantLock from \textsf{java.util.concurrent};
(2)~RW Lock, based on ReentrantReadWriteLock
from \textsf{java.util.concurrent}; and
(3)~FC, based on \emph{flat combining}~\cite{hendler2010flat}.
The code is available at \url{https://github.com/Aksenov239/concurrent-graph}.

We consider workloads parametrized with:
1)~the fraction $x$ of connectivity queries
  (50\%, 80\% or 100\%, as we consider read-dominated workloads);
2)~the set of edges $E$: edges of a single random tree, or
  edges of ten  random trees;
3)~the number of processes $P$ (from $1$ to $64$).
We prepopulate the graph on $10^5$ vertices with edges from $E$: we insert each edge
with probability $\frac{1}{2}$. Then we start $P$ processes.
Each process repeatedly performs operations:
1)~with probability $x$, it calls a connectivity query on two vertices chosen uniformly at random;
2)~with probability $1 - \frac{x}{2}$, it inserts an edge chosen
uniformly at random from $E$;
3)~with probability $1 - \frac{x}{2}$, it deletes an edge chosen
uniformly at random from $E$.

We denote the workloads with $E$ as a single tree as \emph{Tree} workloads,
and other workloads as \emph{Trees} workloads.
\emph{Tree} workloads are interesting because they show the degenerate case:
the dynamic graph behaves as a dynamic tree. 
In this case, about $50\%$ of update operations
successfully change the spanning fprest,
while other update operations only check
the existence of the edge and do not modify
the graph.
\emph{Trees} workloads are interesting because
a reasonably small number (approximately, 5-10\%)
of update operations        
modify the maintained set of all edges and
the underlying complex data structure that maintains
a spanning forest (giving in total the squared logarithmic complexity),
while other update operations can only modify
the set of edges
but cannot modify the underlying complex
data structure (giving in total the logarithmic complexity).

For each setting and each algorithm,
we run the corresponding workload for $10$ seconds to warmup HotSpot JVM
and then we run the workload five more times for $10$ seconds.
The average throughput of the last five runs is reported in Figure~\ref{fig:graph}.

\begin{figure}
\centering
\includegraphics[width=\textwidth]{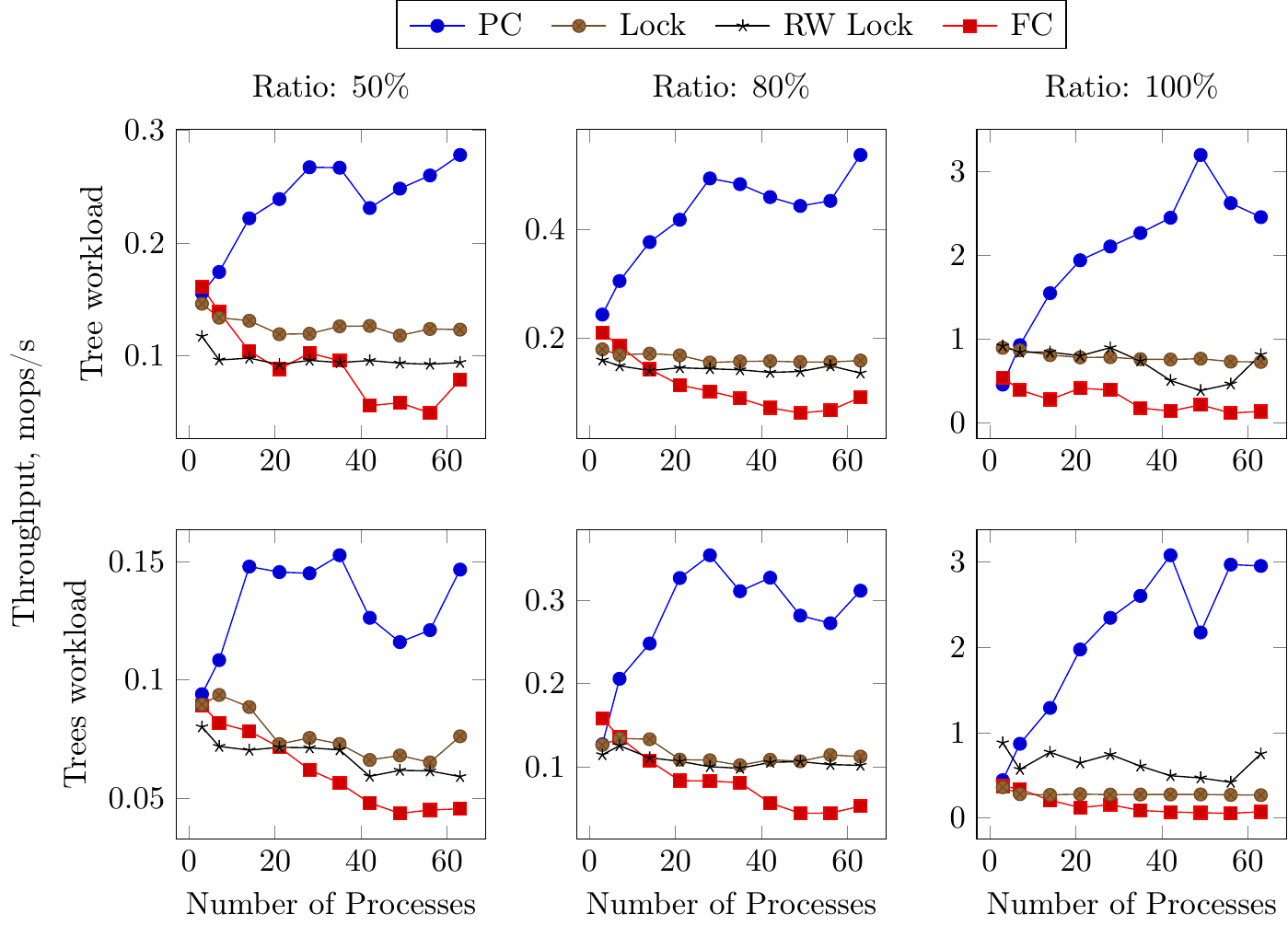}
\caption{Dynamic graph implementations}
\label{fig:graph}
\end{figure}


From the plots we can infer two general observations: PC exhibits the highest thoughput over all considered implementations
and it is the only one whose throughput
scales up with the number of the processes.
On the 100\% workload we expect the throughput curve to be almost linear since all operations
are read-only and can run in parallel.
The plots almost confirm our expectation: the curve of the throughput is a linear function
with coefficient $\frac{1}{2}$ (instead of the ideal coefficient $1$).
We note that this is almost the best we can achieve: a combiner typically collects
operations of only approximately half the number of working processes.
In addition, the induced overhead is still perceptible, since each connectivity
query works in just logarithmic time.
With the decrease of the fraction of read-only operations we expect that
the throughput curve becomes flatter, as 
plots for the 50\% and 80\% workloads confirm.  

It is also interesting to point out several features of other implementations.
At first, FC implementation works slightly worse than Lock and RW Lock.
This might be explained as follows.
Lock implementations (\texttt{ReentrantLock} and
\texttt{ReentrantReadWriteLock}) behind Lock and RWLock
implementations are based on CLH Lock~\cite{craig1993building}
organized as a \emph{queue}:
every competing process is appended to the queue and then waits until
the previous one releases the lock.
Operations on the dynamic graph take significant amount of time, so
under high load when the process finishes its operation it appends itself to
the queue in the lock without any contention.
Indeed, all other processes are likely to be in the queue and,
thus, no process can contend.
By that the operations by processes are serialized with almost no overhead.
%
In contrast, the combining procedure in FC introduces non-negligible
overhead related to gathering the requests and writing them into requests structures.

Second, it is interesting to observe that, against the intuition, RWLock is not so
superior with respect to Lock on read-only workloads.
As can be seen, when there are update operations in the workload
RWLock works even worse than Lock.
We relate this to the fact that the overhead hidden inside \texttt{ReentrantReadWriteLock}
spent on manipulation with read and write requests
is bigger than the overhead spent by \texttt{ReentrantLock}.
With the increase of the percentage of read-only operations the difference
between Lock and RWLock diminishes and RWLock becomes dominant
since read operations become more likely to be applied concurrently
(for example, on 50\% it is normal to have an execution without any parallelization:
read operation, write operation, read operation, and so on).
However, on 100\% one could expect that RWLock should exhibit ideal throughput.
Unfortunately, in this case, under the hood \texttt{ReentrantReadWriteLock} uses
compare\&swap on the shared variable that represents the number of current read operations.
Read-only operations take enough time but not enough to amortize the considerable traffic
introduced by concurrent compare\&swaps.
Thus, the plot for RWLock is almost flat, getting even slightly worse
with the increase of the number of processes,
and we blame the traffic for this.

\subsection{Priority Queue}
\label{sec:exp:priority-queue}

We run our algorithm (PC) with \emph{flat combining}~\cite{hendler2010flat}
as a combining data structure
against six state-of-the-art concurrent
priority queues: 
(1)~the lock-free skip-list by Linden and Johnson (Linden SL~\cite{linden2013skiplist}),
(2)~the lazy lock-based skip-list (Lazy SL~\cite{herlihy2011art}),
(3)~the non-linearizable lock-free skip-list by Herlihy and Shavit (SkipQueue~\cite{herlihy2011art})
as an adaptation of Lotan and Shavit's algorithm~\cite{shavit2000skiplist},
(4)~the lock-free skip-list from Java library (JavaLib),
(5)~the binary heap with flat combining (FC Binary~\cite{hendler2010flat}),
and (6)~the pairing heap with flat combining (FC Pairing~\cite{hendler2010flat}).
\footnote{We are aware of the cache-friendly priority queue by Braginsky et al.~\cite{braginsky2016cbpq},
but we do not have its Java implementation.}
The code is available at \url{https://github.com/Aksenov239/FC-heap}.

We consider workloads parametrized by:
1)~the initial size of the queue $S$ ($8\cdot 10^5$ or $8 \cdot 10^6$); and
2)~the number $P$ of working processes (from $1$ to $64$). 
We prepopulate the queue with $S$ random integers chosen uniformly
from the range $[0, 2^{31} - 1]$.
Then we start $P$ processes, and each process repeatedly  
performs operations:
with equal probability it either inserts a random value
taken uniformly from $[0, 2^{31} - 1]$
or extracts the minimum value.

For each setting and each algorithm, we run the corresponding
workload for $10$ seconds to warmup HotSpot JVM
and then we run the workload five more times for $10$ seconds.
The average throughput of the last five runs is reported in Figure~\ref{fig:heap}.

\begin{figure}
\centering
\includegraphics[width=0.7\textwidth]{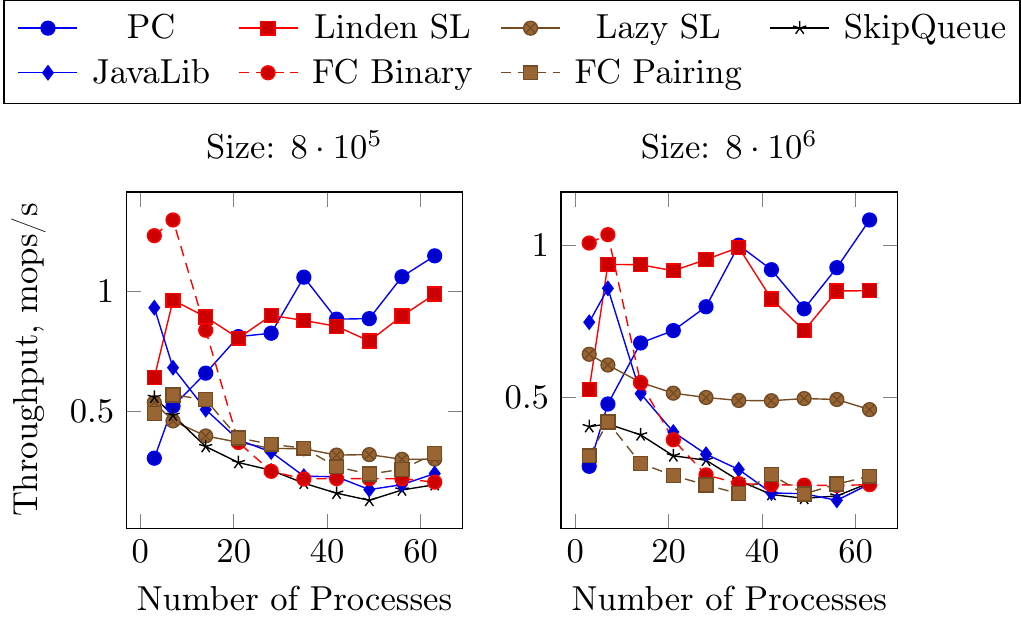}
\caption{Priority Queue implementations}
\label{fig:heap}
\end{figure}

%
%
%
%
On a small number of processes ($<15$), PC performs worse than other algorithms. 
With respect to Linden SL, Lazy SL, SkipQueue and JavaLib this can be explained by two different issues:
\begin{itemize}
\item Synchronization incurred by PC is not compensated by the work done;
\item Typically, a combiner collects operations of only approximately half the processes,
  thus, we ``overserialize'', i.e., only $\frac{n}{2}$ operations can be performed in parallel.
\end{itemize}
In contrast, on small number of processes, the other four algorithms can perform operations almost with no contention.
With respect to algorithms based on flat combining, FC Binary and FC Pairing,
our algorithm is simply slower on one process than
the simplest sequential binary and pairing heap algorithms.

With the increase of the number of processes
the synchronization overhead significantly increases
for all algorithms
(in addition to the fact that FC Binary and FC Pairing cannot scale).
As a result, starting from $15$ processes, PC outperforms all algorithms except for Linden SL.
Linden SL relaxes the contention during ExtractMin operations,
and it helps to keep the throughput approximately constant.
At approximately $40$ processes the benefits of the parallel
batched algorithm in PC starts prevailing the costs of explicit
synchronization, and our algorithms overtakes Linden SL.
%

It is interesting to note that FC Binary performs very well when the number of processes is small:
the overhead on the synchronization is very small, the processes are from the same core
and the simplest binary heap performs operations very fast.

\section{Related Work}                         
\label{sec:related}
To the best of our knowledge, Yew et al.~\cite{yew1987distributing}
were the first to propose combining concurrent operations.
They introduced a \emph{combining tree}: processes start at distinct leaves,
traverse upwards, and gain exclusive access by reaching the root.
If, during the traversal, two processes access the same
tree node, one of them adopts the operations of another and continues the
traversal, while the other 
waits until its
operations are completed. Several improvements of this technique have
been discussed, such as adaptive combining tree~\cite{shavit2000combining},
barrier implementations~\cite{gupta1989scalable,mellor1991algorithms}
and counting networks~\cite{shavit1996diffracting}.

A different approach was proposed by Oyama et al.~\cite{oyama1999executing}.
Here the data structure is protected by a lock. A thread with a new 
operation to be performed adds it to a list of submitted
requests and then tries to acquire the lock.
The 
winner of the lock performs the pending requests on
behalf of other processes from the list in LIFO order. 
The main drawback of this approach is that all processes
have to perform CAS on the head of the list.
The \emph{flat combining} technique presented by Hendler et
al. \cite{hendler2010flat} addresses this issue by 
replacing the list of requests with a \emph{publication list} which
maintains a distinct \emph{publication record} per participating process.
A process puts its new operation in its publication
record, and the publication record is only maintained
in the list if the process is ``active enough''.  
This way the processes generally do not contend on the head of the list.
Variations of flat combining
were later proposed for various contexts~\cite{fatourou2011highly,fatourou2012revisiting,holt2013flat,drachsler2014lcd}.

\emph{Hierarchical combining}~\cite{hendler2010scalable} is the first
attempt to improve performance of combining
using the computational power of clients.
The list of requests is split into blocks, and each of these blocks
has its own combiner.
The combiners push the combined requests from the block into the
second layer implemented as the standard flat combining with one
combiner.
%
This approach, however, may be sub-optimal as it does not involve \emph{all} 
clients.
Moreover, this approach works only for specific data structures,
such as stacks or unfair synchronous queues,
where operations could be combined without accessing the data structure.

In a different context, Agrawal et al.~\cite{agrawal2014provably} 
suggested to use a \emph{parallel batched} data structure instead of a concurrent one.
%
They provide provable bounds on the running time
of a dynamic multithreaded parallel program using $P$ processes
and a specified \emph{scheduler}.
%
The proposed scheduler extends the work-stealing scheduler by maintaining
separate \emph{batch} work-stealing deques that are accessed whenever processes have
operations to be performed on the abstract data type.
A process with a task to be performed on the data structure stores it  
in a \emph{request array} and tries to acquire
a global lock.
If succeeded, the process puts the task to perform the
batch update in its batch deque.
Then all the processes with requests in the request array
run the work-stealing routine on the batch deques
until there are no tasks left.
The idea of~\cite{agrawal2014provably} is similar to ours.
However, their algorithm is designed for systems with the fixed set of processes,
whereas we allow the processes to join and leave the execution.
From a more formal perspective, our goals are different:
we aim at improving the performance of a \emph{concurrent data structure}
while their goal was to establish bounds on the running time of a \emph{
parallel program in dynamic multithreading}.
Furthemore, implementing a concurrent data structure from
its parallel batched counterpart for dynamic multithreading
is only one of the applications
of our technique, as sketched in Section~\ref{subsec:batch}.
%


\section{Concluding remarks}
\label{sec:discussion}
Besides performance gains, parallel combining can potentially 
bring other interesting benefits. 

First, a parallel batched implementation is typically provided
with bounds on the running time.
The use of parallel combining might allow us to derive bounds
on the operations of resulting \emph{concurrent} data structures.
Consider, for example, a binary search tree.
To balance the tree, state-of-the-art
concurrent algorithms use the relaxed
AVL-scheme~\cite{bouge1998height}.
This scheme guarantees that the height of the tree 
never exceeds the contention level (the number of concurrent
operations) plus the logarithm of the tree size.
%
%
Applying parallel combining to a parallel batched binary search tree
(e.g.,~\cite{blelloch2016just}), we get a concurrent tree
with a strict logarithmic bound on the height.
\pk{should we say that no
    implementation guarantees that now?}
  
Second, the technique might enable the first ever concurrent
  implementation of certain data types, for example, a \emph{dynamic
    tree}~\cite{acar2017brief}.
  \pk{Why is it interesting? Anything to
  add here?} 



As shown in Section~\ref{sec:exp}, our concurrent priority
queue performs well compared to state-of-the-art algorithms.
A possible explanation is that the underlying parallel batched
implementation is designed for \emph{static} multithreading and, thus,
it has little synchronization overhead.
This might not be the case for implementations
based on \emph{dynamic} multithreading, where
the overhead induced by the scheduler can be much higher.
We intend to explore this distinction in the forthcoming work.

\bibliographystyle{plainurl}
\bibliography{references}

\newpage
\appendix
\section{Priority Queue with Parallel Combining}
\label{sec:priority}

\emph{Priority queue} is an abstract data type that maintains
an ordered multiset and supports two operations:
\begin{itemize}
\item \texttt{Insert}($v$)~--- inserts value $v$ into the set;
\item $v \leftarrow$ \texttt{ExtractMin}()~--- extracts the smallest value from the set.
\end{itemize}

\subsection{Batched Priority Queues: Review}

%
Several batched priority queues were
proposed in the literature for different parallel machines~\cite{jaja1992introduction}.

Pinotti and Pucci \cite{pinotti1991parallel} proposed
a batched priority queue for a $p$-processor CREW PRAM
implemented as a heap
each node of which contains $p$ values in sorted order:
only batches of size $p$ are accepted.

Deo and Prasad \cite{deo1992parallel} proposed
a similar batched priority queue for a $p$-processor
EREW PRAM
which can accept batches of any size not excewding $p$.
But the batch processing time is proportional to $p$.
For example, even if the batch consists of only one operations and
only one process is involved in the computation,
the execution still takes $O(p \log m)$ time on a queue of size $m$.

Brodal et al. \cite{brodal1998parallel} proposed
a batched priority queue that accepts
batches of \texttt{Insert} and \texttt{DecreaseKey} operations,
but not the batches of \texttt{ExtractMin} operations.
The priority queue maintains a set of
pairs $(\text{key}, \text{element})$ ordered by keys.
\texttt{Insert} operation takes a pair $(k, e)$ and inserts it.
\texttt{DecreaseKey} operations takes a pair $(d, e)$, searches in the
queue for a pair $(d', e)$ such that $d < d'$ and then replaces
$(d', e)$ with $(d, e)$.

Sanders \cite{sanders1998randomized} developed
a randomized distributed priority queue for MIMD.
MIMD computer has $p$ processing elements that 
communicate via asynchronous message-passing.
%
%
Again, the batch execution time is proportional to $p$,
regardless of the batch size.
%
%

Bingmann et al. \cite{bingmann2015bulk}
described a variation of the priority queue by Sanders~\cite{sanders1998randomized}
for external memory and, thus, it has the same issue.

To summarize, all earlier implementations we are aware of 
are tailored for a fixed number of processes $p$.
As a result, (1)~the running time of
the algorithms always depend on $p$, regardless of the
batch size and the number of involved processes;
(2)~once a data structure is constructed, we are unable to introduce more processes
into system and use them efficiently.
To respond to this issues, we propose a new parallel batched algorithm that applies a batch
of size $c$ to a queue of size $m$
in $O(c \cdot (\log c + \log m))$ RMRs
for CC or DSM models in total and in $O(c + \log m)$ RMRs per process.
By that, our algorithm can use up to $c \approx \log m$ processes efficiently.

\subsection{Sequential Binary Heap}

Our batched priority queue
is based on the \emph{sequential binary heap}
by Gonnet and Munro \cite{gonnet1986heaps},
one of the simplest and fastest sequential priority queues.
We briefly describe this algorithm below.

A binary heap of size $m$ is represented as a complete binary tree
with nodes indexed by $1, \ldots, m$.
Each node $v$ has at most two children: $2v$ and $2v + 1$ (to exist, $2v$ and $2v + 1$
should be less than or equal to $m$).
For each node, the \emph{heap property} should be satified:
the value stored at the node is less than the values stored at its children.

The heap is represented with \emph{size} $m$ and an array $a$
where $a[v]$ is the value at node $v$.
Operations \texttt{ExtractMin} and \texttt{Insert} are performed as follows:
\begin{itemize}
\item \texttt{ExtractMin} records the value $a[1]$ as a response,
  copies $a[m]$ to $a[1]$,
  decrements $m$ and performs the \textit{sift down} procedure to restore the heap property.
  Starting from the root, for each node $v$ on the path,
  we check whether value $a[v]$ is less than values
  $a[2v]$ and $a[2v + 1]$. If so, then the heap property is satisfied and
  we stop the operation.
  Otherwise, we choose the child $c$, either $2v$ or $2v + 1$, with the smallest value, swap values $a[v]$
  and $a[c]$, and continue with $c$.
\item \texttt{Insert($x$)} initializes a variable $val$ to $x$, increments $m$
  and traverses the path from the root to a new node $m$.
  For each node $v$ on the path, if $val<a[v]$, then the two values
  are swapped.
  Then the operation continues with the child of $v$ that lies on the path
  from $v$ to node $m$.
  Reaching node $m$ the operation sets its value to $val$.
\end{itemize}
The complexity is $O(\log m)$ steps per operation.

\subsection{Combiner and Client. Classes}

Now, we describe our novel parallel batched priority queue
in the form of \texttt{COMBINER\_CODE} and \texttt{CLIENT\_CODE} that fits
the parallel combining framework described in Section~\ref{sec:pc}. 
It is based on the sequential binary heap by Gonnet and Munro~\cite{gonnet1986heaps}.
The code of necessary classes is presented in Figure~\ref{fig:pq:classes},
\texttt{COMBINER\_CODE} is presented in Figure~\ref{fig:pq:combiner},
and \texttt{CLIENT\_CODE} is presented in Figure~\ref{fig:pq:client}.

\begin{figure}
\input{code/priority-classes.tex}
\caption{Parallel Priority Queue. Classes}
\label{fig:pq:classes}
\end{figure}

\begin{figure}
\input{code/priority-combiner.tex}
\caption{Parallel Priority Queue. \texttt{COMBINER\_CODE}}
\label{fig:pq:combiner}
\end{figure}

\begin{figure}
\input{code/priority-client.tex}
\caption{Parallel Priority Queue. \texttt{CLIENT\_CODE}}
\label{fig:pq:client}
\end{figure}

We introduce a sequential object \texttt{InsertSet}
(Figure~\ref{fig:pq:classes} Lines~\ref{line:classes:is:begin}-\ref{line:classes:is:end})
that consists
of two sorted linked lists $A$ and $B$
supporting \emph{size} operation $|\cdot|$.
The size of InsertSet $S$ is $|S| = |S.A| + |S.B|$.
InsertSet supports operation \texttt{split}:
\texttt{(X, Y) $\leftarrow$ S.split($\ell$)},
which splits InsertSet $S$ into two InsertSet objects $X$ and $Y$,
where $|X| = \ell$ and $|Y| = |S| - \ell$.
This operation is executed sequentially
in $O(L = \min(\ell, |S| - \ell))$ steps.
The split operation works as follows:
1)~new InsertSet $T$ is created (Line~\ref{line:classes:is:creation});
2)~if $|S.A| \geq L$ then the first $L$ values
  of $S.A$ are moved to $T.A$ (Lines~\ref{line:classes:is:amove:1}-\ref{line:classes:is:amove:2});
  otherwise, the first $L$ values from $S.B$ are moved to $T.B$
  (Lines~\ref{line:classes:is:bmove:1}-\ref{line:classes:is:bmove:2});
  note that either $|S.A|$ or $|S.B|$ should be at least $L$;
3)~if $L = \ell$ then $(T, S)$ is returned; otherwise, $(S, T)$ is returned
(Lines~\ref{line:classes:is:return}-\ref{line:classes:is:end}).

The heap is defined by its size $m$ and an array $a$ of Node objects.
Node object (Figure~\ref{fig:pq:classes} Lines~\ref{line:classes:node:1}-\ref{line:classes:node:2})
has three fields: value $val$, boolean $locked$ and InsertSet $split$.

\texttt{STATUS\_SET} consists of three items: \texttt{INITIAL}, \texttt{SIFT} and \texttt{FINISHED}.

A Request object (Figure~\ref{fig:pq:classes}
Lines~\ref{line:classes:request:1}-\ref{line:classes:request:2}) consists of:
a method \textit{method} to be called
and its input argument $v$;
a result \textit{res} field;
a \textit{status} field;
a node identifier $start$.

%

\subsection{ExtractMin Phase}

\subparagraph*{Combiner: ExtractMin preparation}
(Figure~\ref{fig:pq:combiner} Lines~\ref{line:pq:combiner:load}-\ref{line:pq:combiner:em:end}).
First, the combiner withdraws requests $A$
from the combining data structure $\mathbb{C}$ (Line~\ref{line:pq:combiner:load}).
If the size of $A$ is larger than $m$, 
the combiner serves the requests sequentially
(Lines~\ref{line:pq:combiner:seq:1}-\ref{line:pq:combiner:seq:2}).
Intuitively, in this case, there is no way to parallelize the execution.
For example, if $A$ consists of only Insert requests and if there are more Insert requests than the number of
nodes in the corresponding binary tree,
we cannot insert them in parallel. 

In the following, we assume that the size of the queue is at least
the size of $A$.
The combiner splits $A$ into sets $E$ and $I$
(Lines~\ref{line:pq:combiner:split:1}-\ref{line:pq:combiner:split:2}): the set of ExtractMin requests
and the set of Insert requests. 
Then it
finds $|E|$ nodes $v_1, \ldots, v_{|E|}$ of heap with the smallest
values (Lines~\ref{line:pq:combiner:best:1}-\ref{line:pq:combiner:best:2}), e.g.,
using the Dijkstra-like algorithm in $O(|E| \cdot \log |E|)$ primitive steps or $O(|E|)$ RMRs.
For each request $E[i]$, the combiner sets $E[i].res$ to $a[v_i].val$,
$a[v_i].locked$ to \texttt{true}, and $E[i].start$ to $v_i$
(Lines~\ref{line:pq:combiner:em:setup:1}-\ref{line:pq:combiner:em:setup:2}).

The combiner proceeds by pairing Insert requests in $I$
with ExtractMin requests in $E$ using the following procedure.
(Lines~\ref{line:pq:combiner:em:pairing:1}-\ref{line:pq:combiner:em:pairing:2}).
Suppose that $\ell = \min(|E|, |I|)$. For each $i \in [1, \ell]$,
the combiner sets $a[v_i].val$ to $I[i].v$
and $I[i].status$ to \texttt{FINISHED},
i.e., this Insert request becomes completed.
Then, for each $i \in [\ell + 1, |E|]$, the combiner sets $a[v_i].val$ to the value of the last node $a[m]$ and
decreases $m$, as in the sequential algorithm
(Lines~\ref{line:pq:combiner:em:setup2:1}-\ref{line:pq:combiner:em:setup2:2}).
Finally, the combiner sets the status of all requests in $E$ to \texttt{SIFT}
(Lines~\ref{line:pq:combiner:em:sift:1}-\ref{line:pq:combiner:em:sift:2}).

\subparagraph*{Clients: ExtractMin phase}
(Figure~\ref{fig:pq:client} Lines~\ref{line:pq:client:em:start}-\ref{line:pq:client:em:end}).
Briefly, the clients sift down the values in nodes $v_1, \ldots, v_{|E|}$
in parallel using hand-over-hand locking:
the $locked$ field of a node is set whenever there is a \emph{sift down} operation
working on that node.

A client $c$ waits until the status of its request becomes \texttt{SIFT}.
$c$ starts sifting down from $req.start$.
Suppose that $c$ is currently at node $v$.
$c$ waits until the $locked$ fields of the children become \texttt{false}
(Lines~\ref{line:pq:client:em:wait:1}-\ref{line:pq:client:em:wait:2}).
If $a[v].val$, the value of $v$, is less than the values
in its children, then \emph{sift down} is finished
(Lines~\ref{line:pq:client:em:stop:1}-\ref{line:pq:client:em:stop:2}):
$c$ unsets $a[v].locked$ and sets the status of its request to \texttt{FINISHED}.
Otherwise, let $w$ be the child with the smallest value.
Then $c$ swaps $a[v].val$ and $a[w].val$, sets $a[w].locked$,
unsets $a[v].locked$ and continues with node $w$
(Lines~\ref{line:pq:client:em:swap}-\ref{line:pq:client:em:end}).
%

If the request of the combiner is ExtractMin, it also runs the code above as a client
(Figure~\ref{fig:pq:combiner} Lines~\ref{line:pq:combiner:em:own:1}-\ref{line:pq:combiner:em:own:2}).
The combiner considers the ExtractMin phase completed when all requests in $E$ have status \texttt{FINISHED}
(Lines~\ref{line:pq:combiner:em:wait}-\ref{line:pq:combiner:em:end}).

\subsection{Insert Phase}
\subparagraph*{Combiner: Insert preparation} 
(Figure~\ref{fig:pq:combiner} Lines~\ref{line:pq:combiner:i:start}-\ref{line:pq:combiner:i:end}).
For Insert requests, the combiner removes all completed requests
from $I$ (Line~\ref{line:pq:combiner:i:start}).
%
Nodes $m + 1, \ldots, m + |I|$ have to be leaves,
because we assume that the size of $I$ is at most
the size of the queue.
We call these leaves \emph{target nodes}.
The combiner then finds all \emph{split nodes}:
nodes for which the subtrees of both children
contain at least one target node.
(See Figure~\ref{fig:tree} for an example of how target and split
nodes can be defined.)

\begin{wrapfigure}{L}{0.5\textwidth}
\begin{center}
\includegraphics[width=6cm]{pics/heap.pdf}
\end{center}
\caption{Split and target nodes}
\label{fig:tree}
\end{wrapfigure}

Since we have $|I|$ target nodes,
there are exactly $|I| - 1$ split nodes $u_1, \ldots, u_{|I| - 1}$:
$u_i$ is the lowest common ancestor of nodes $m + i$ and
$m + i + 1$.
They can be found in $O(|I| + \log m)$ primitive steps
(Lines~\ref{line:pq:combiner:i:target:1}-\ref{line:pq:combiner:i:target:2}):
starting with node $m + i$ go up the heap
until a node becomes a left child of some node $pr$;
this $pr$ is $u_i$.
We omit the discussion about the fields $l$ and $r$ of $I[i]$:
they represent the smallest and the largest leaf identifiers
in the subtree of $u_i$, and they are used to calculate
the number of leaves that are newly inserted, i.e.,
$m + 1, \ldots, m + |I|$, in a subtree in constant time.
The combiner sets $I[1].start$ to the root (the node with identifier $1$),
(Line~\ref{line:pq:combiner:i:target:1}) and,
for each $i \in [2, |I|]$, it sets $I[i].start$ to
the right child of $u_{i - 1}$ (node $2 \cdot u_{i - 1} + 1$)
(Line~\ref{line:pq:combiner:i:setstart}).
Then the combiner creates an InsertSet object $X$, sorts the arguments
of the requests in $I$, puts them to $X.A$ and sets $a[1].split$ to $X$
(Lines~\ref{line:pq:combiner:i:is:1}-\ref{line:pq:combiner:i:is:2}).
Finally, it sets the status fields of all requests in $I$ to \texttt{SIFT}
(Lines~\ref{line:pq:combiner:i:sift:1}-\ref{line:pq:combiner:i:sift:2}).


\subparagraph*{Clients: Insert phase}
(Lines~\ref{line:pq:client:i:start}-\ref{line:pq:client:i:end}). 
Consider a client $c$ with an incompleted request $req$. 
It waits while $a[req.start].split$ is null
(Lines~\ref{line:pq:client:i:wait:1}-\ref{line:pq:client:i:wait:2}).
Now $c$ is going to insert values from InsertSet $a[req.start].split$ to the subtree of $req.start$.
Let $S$ be a local InsertSet variable initialized with $a[req.start].split$.
For each node $v$ on the path, 
$c$ inserts values from $S$ into the subtree of $v$.
%
%
$c$ calculates the minimum value $x$ in $S$
(Lines~\ref{line:pq:client:i:min:1}-\ref{line:pq:client:i:min:2}):
the first element of $S.A$ or the first element of $S.B$.
If $a[v].val$ is bigger than $x$, then the client removes $x$ from $S$,
appends $a[v].val$ to the end of $S.B$ and sets $a[v].val$ to $x$
(Lines~\ref{line:pq:client:i:swap:1}-\ref{line:pq:client:i:swap:2}).
Note that by the algorithm $S.B$ contains only values that were stored
in the nodes above node $v$, thus, any value in $S.B$ cannot be bigger than $a[v].val$
and after appending $a[v].val$ $S.B$ remains sorted.
%
Then the client calculates the number $inL$ of target nodes in the subtree of the left child of $v$ and
the number $inR$ of target nodes in the subtree of the right child of $v$
(Lines~\ref{line:pq:client:i:subtree:1}-\ref{line:pq:client:i:subtree:2}, to calculate
these numbers in constant time we use fields $l$ and $r$ of the request).
If $inL = 0$, then all the values in $S$ should be inserted into the
subtree of the right child of $v$,
and $c$ proceeds with the right child $2v + 1$
(Lines~\ref{line:pq:client:i:l0:1}-\ref{line:pq:client:i:l0:2}).
If $inR = 0$, then, symmetrically, $c$ proceeds with the left child $2v$
(Lines~\ref{line:pq:client:i:r0:1}-\ref{line:pq:client:i:r0:2}).
Otherwise, if $inL \ne 0$ and $inR \ne 0$, $v$ is a split node and, thus, there is a client that
waits at the right child $2v + 1$.
Hence, $c$ splits $S$ to \texttt{(X, Y) $\leftarrow$ S.split(inL)}
(Line~\ref{line:pq:client:i:split}): the
values in $X$ should be inserted into the
subtree of node $2v$ and the values in $Y$ should be inserted into the
subtree of node $2v+1$.
Then $c$ sets $a[2v + 1].split$ to $Y$, sets $S$ to $X$ and proceeds
to node $2v$ (Lines~\ref{line:pq:client:i:lr:1}-\ref{line:pq:client:i:lr:2}).
When $c$ reaches a leaf $v$ it sets the value $a[v].val$ to the only value in $S$
(Lines~\ref{line:pq:client:i:last}-\ref{line:pq:client:i:end}) and
sets the status of the request $req$ to \texttt{FINISHED} (Line~\ref{line:pq:client:finished}).

If the request of the combiner is an incompleted Insert, it runs the code
above as a client
(Figure~\ref{fig:pq:combiner} Lines~\ref{line:pq:combiner:i:own:1}-\ref{line:pq:combiner:i:own:2}).
The combiner considers the Insert phase completed
when all requests in $I$ have status \texttt{FINISHED}
(Lines~\ref{line:pq:combiner:i:wait}-\ref{line:pq:combiner:i:end}).

\subsection{Analysis}
Now we provide theorems on correctness and time bounds.
\begin{theorem}
Our concurrent priority queue implementation is linearizable.
\end{theorem}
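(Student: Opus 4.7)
The plan is to reuse the structure of the proof of Theorem~\ref{theorem:rw}: view any execution as a sequence of non-overlapping combining phases, and linearize each batch $B = E \cup I$ within its phase. I will order all ExtractMin operations of $E$ first, by increasing value of the returned response, followed by all Insert operations of $I$ in arbitrary order. Every such linearization point lies inside its operation's interval, because each request was active when the combiner read it from $\mathbb{C}$ and its owner does not return until \texttt{status} is set to \texttt{FINISHED}.

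Two properties must then be verified: (a)~the responses are those of the chosen sequential order, and (b)~the heap, at the end of the phase, stores exactly the multiset that the sequential order would produce. For~(a), I will give a standard frontier argument showing that the Dijkstra-like selection in Lines~\ref{line:pq:combiner:best:1}--\ref{line:pq:combiner:best:2} outputs the $|E|$ heap nodes of smallest value in increasing order: by the min-heap property, at step~$i$ the $i$-th smallest value still present must be the minimum of the current frontier heap, so $E[i].res = a[v_i].val$ is the $i$-th smallest and matches the response the sequential order would return.

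For~(b), multiset preservation is by inspection: the $|E|$ extracted values are the only ones removed, Insert values are the only ones added, and every other value is merely relocated by sift-down swaps or by the Insert-phase routing. Heap-property restoration in the ExtractMin phase rests on the key observation that $\{v_1,\dots,v_{|E|}\}$ is closed under taking ancestors (since an ancestor's value is no larger than its descendant's), so the locked region is an upward-closed connected subtree. Each client, after waiting on the children's \textit{locked} bits, therefore reads values that are stable and already consistent with the sub-heap invariant below it, and the hand-over-hand discipline inductively re-establishes the heap property along every sift-down path; paired Insert values, which take the place of some extracted $v_i$, are sifted down to their proper position by the same process.

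The main obstacle will be the Insert phase. I plan to maintain, along each client's descent from its \texttt{start} node, two invariants: (i)~$S.A \cup S.B$ equals the multiset of values that still must be placed at some descendant of the current node $v$; (ii)~$S.B$ is sorted and every value in $S.B$ was previously stored at a strict ancestor of $v$, hence is $\le$ the original $a[v].val$. Invariant~(ii) justifies the swap step in Lines~\ref{line:pq:client:i:swap:1}--\ref{line:pq:client:i:swap:2} and keeps $S.B$ sorted after appending $a[v].val$, while invariant~(i) is preserved at split nodes because $inL,inR$ (computed in constant time from the fields $l,r$ and the globals $L,R$) exactly count the target nodes of the two subtrees, and \texttt{InsertSet.split(inL)} transfers precisely $inL$ values to the left descendant. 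Races between sibling clients are excluded because a client writes outside its assigned subtree only by publishing $a[2v+1].split$ at a split node, which is the single memory location on which the right-subtree client is blocked at Line~\ref{line:pq:client:i:wait:1}. Putting (a), (b) and these invariants together yields that the chosen linearization is consistent with the priority-queue specification.
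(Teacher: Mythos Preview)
Your proposal is correct and follows essentially the same route as the paper: the same linearization order within each combining phase (ExtractMin by increasing response, then Inserts), the same observation that the extracted nodes $\{v_1,\dots,v_{|E|}\}$ form a root-closed connected subtree, and the same ``$S.B$ is sorted and contains only ancestor values'' invariant for the Insert phase. The paper is only slightly more explicit in two places---it runs an explicit induction on \emph{unlock events} (``when $v$ is unlocked, $a[v]$ is the minimum of its subtree'') for the ExtractMin sift-downs, and for Insert it also tracks that $S.A$ stays sorted and that each visited node ends up holding the minimum of its subtree---but these are routine completions of your outline.
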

\begin{proof}
The execution can be split into combining phases
(Figure~\ref{fig:pq:combiner} Lines~\ref{line:pq:combiner:load}-\ref{line:pq:combiner:i:end})
which do not intersect.
We group the operations into batches corresponding
to the combining phase in which they are applied.

Consider the $i$-th combining phase.
%
%
We linearize all the operations from the $i$-th phase
right after the end of the corresponding
\texttt{getRequests()} (Line~\ref{line:pq:combiner:load}) in the following
order: at first, we linearize ExtractMin operations
in the increasing order of their responses, then, we linearize
Insert operations in any order.

To see that this linearization is correct it is enough to prove that 
the combiner and the clients apply the batch correctly.
\begin{lemma}
Suppose that the batch of the $i$-th combining phase
contains $a$ ExtractMin operations and $b$ Insert operations
with arguments $x_1, \ldots, x_b$.
Let $V$ be the set of values stored in the priority queue
before the $i$-th phase.
The combiner and the clients apply this batch correctly:
\begin{itemize}
\item The minimal $a$ values $y_1, \ldots, y_a$ in $V$ are returned
  to ExtractMin operations.
\item After an execution the set of values stored in the queue
  is equal to $V \cup \{x_1, \ldots, x_b\} \setminus \{y_1, \ldots, y_a\}$.
  and the values are stored in nodes with identifiers $1, \ldots, |V| - b + a$.
\item After an execution the heap property is satisfied for each node.
\end{itemize}
\end{lemma}
\begin{proof}
The first statement is correct, because the combiner chooses the smallest
$a$ elements from the priority queue and sets them as the results of ExtractMin requests
(Lines~\ref{line:pq:combiner:best:1}-\ref{line:pq:combiner:em:setup:2}).

The second statement about the set of values straightforwardly follows
from the algorithm.
During ExtractMin phase the combiner finds $a$ smallest elements,
replaces them with $x_1, \ldots, x_{\min(a, b)}$
and with values from the last $a - \min(a, b)$ nodes of the heap:
the set of values in the priority queue becomes
$V \cup \{x_1, \ldots, x_{\min(a, b)}\} \setminus \{y_1, \ldots, y_b\}$ and
the values are stored in nodes $1, \ldots, |V| - a + \min(a, b)$.
Then, the \emph{sift down} is initiated, but it does not change the set of values and
it does not touch nodes other than $1, \ldots, |V| - a + \min(a, b)$.
During Insert phase the values $x_{\min(a, b) + 1}, \ldots, x_{b}$ are inserted and
new nodes which are used in Insert phase are $|V| - a + \min(a, b) + 1, \ldots, |V| - a + b$.
Thus, the final set of values is $V \cup \{x_1, \ldots, x_b\} \setminus \{y_1, \ldots, y_a\}$
and the values are stored in nodes $1, \ldots, |V| - a + b$.

The third statement is slightly tricky.
At first, the combiner finds 
$a$ smallest elements that should be removed and replaces them with $x_1, \ldots, x_{\min(a, b)}$ and 
with values from the last $a - \min(a, b)$ nodes of the heap.
Suppose that these $a$ smallest elements were
at nodes $v_1, \ldots, v_a$, sorted
by their depth (the length of the shortest path from the root) in non-increasing order.
These nodes form a connected subtree where $v_a$ is the root of the heap.
Suppose that they do not form a connected tree or $v_a$ is not the root of the heap.
Then there exists a node $v_i$ which parent $p$ is not $v_j$ for any $j$.
This means that the values in nodes $v_1, \ldots, v_a$ are not the smallest $a$ values:
by the heap property a value in $p$ is smaller than the value in $v_i$.

Now $a$ processes perform \emph{sift down} from the nodes $v_1, \ldots, v_a$.
We show that when a node $v$ is unlocked, i.e., its \texttt{locked}
field is set to false, the value at $v$ is the smallest value in the subtree of $v$.
This statement is enough to show that the heap property holds for all
nodes after ExtractMin phase, because at that point
all nodes are unlocked.
%

Consider an execution of \emph{sift down}.
We prove the statement by induction on
the number of unlock operations.
Base. No unlock happened and the statement is satisfied for all
unlocked nodes, i.e., all the nodes except for $v_1, \ldots, v_a$.
Transition. Let us look right before the $k$-th unlock:
the unlock of a node $v$.
The left child $l$ of $v$ should be unlocked
and, thus, $l$ contains a value
that is the smallest in its subtree.
The same statement holds for the right child $r$ of $v$.
$v$ chooses the smallest value between the value at $v$ and the values at $l$ and $r$.
This value is the smallest in the subtree of $v$.
Thus, the statement holds for $v$ when unlocked.

After that, the algorithm applies the incompleted $b - \min(a, b)$ Insert operations.
We name the nodes with at least one target node in the subtree
as \emph{modified}.
Modified nodes are the only nodes whose value can be changed
and, also, each modified node is visited by exactly one client.
To prove that after the execution the heap property
for each modified node holds: we show by induction
on the depth of a modified node that if a node $v$
is visited by a client with InsertSet $S$ then:
(1)~$S.A$ is sorted; (2)~$S.B$ is sorted and contains only values
that were stored in ancestors of $v$ after ExtractMin phase;
and (3)~$v$ contains the smallest value in its subtree when the client finishes with it.
Base. In the root $S.A$ is sorted, $S.B$ is empty and the
new value in the root is either the first value in $S.A$ or the current value in
the root, thus, it is the smallest value in the heap.
Transition from depth $k$ to depth $k + 1$.
Consider a modified node $v$ at depth $k + 1$ and
its parent $p$.
Suppose that $p$ was visited by a client with InsertSet $S_p$.
By induction, $S_p.A$ is sorted and $S_p.B$ is sorted and contains
only the values that were in ancestors.
Then the client chooses the smallest value in $p$: either $a[p]$,
the first value of $S_p.A$ or the first value of $S_p.B$.
Note that after any of these three cases $S_p.A$ and $S_p.B$ are sorted and
$S_p.B$ contains only values from ancestors and node $p$:
\begin{itemize}
\item $a[p]$ is the smallest, then $S_p.A$ and $S_p.B$ are not modified;
\item we poll the first element of $S_p.A$ or $S_p.B$; $S_p.A$ and $S_p.B$ are still sorted;
  then we append $a[p]$ to $S_p.B$, and $a[p]$ has to be the biggest element in $S_p.B$, since
  $S_p.B$ contains only the values from ancestors.
\end{itemize}
Then the client splits $S_p$ and some client, possibly, another one, works on $v$ with IntegerSet $S$.
Since, $S$ is a subset of $S_p$ then $S.A$ is sorted and $S.B$ is sorted and contains only the values
from ancestors (ancestors of $p$ and, possibly, $p$).
Finally, the client chooses the smallest value to appear in the subtree: the first value of $S.A$,
the first value of $S.B$ and $a[v]$.
\end{proof}

\end{proof}

\begin{theorem}
Suppose that the combiner collects $c$ requests using \texttt{getRequests()}.
Then the combiner and the clients
apply these requests to a priority queue of size $m$
using $O(c + \log m)$ RMRs in CC model each
and $O(c \cdot (\log c + \log m))$ RMRs in CC model in total.
\end{theorem}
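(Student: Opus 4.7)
The plan is to decompose the work into four stages---the combiner's ExtractMin preparation, the parallel sift-down phase, the combiner's Insert preparation, and the parallel insert traversals---and bound the RMRs contributed by each. For the per-process bound I will show that no single process performs more than $O(c+\log m)$ RMRs, and for the total bound the only stage whose naive per-process sum $O(c^2)$ is too loose is the split work during parallel inserts, which requires a separate charging argument.

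For the combiner's two preparation stages, the Dijkstra-like scan that locates the $|E|$ smallest heap nodes reads $O(|E|)$ entries of $a$ while using only local memory for the auxiliary priority queue, the upward walks from each of the $|I|$ newly inserted leaves to find the split nodes amortize to $O(|I|+\log m)$ RMRs, sorting the Insert arguments is local, and initializing the request and InsertSet fields costs $O(c)$ RMRs; altogether the combiner spends $O(c+\log m)$ RMRs in preparation. In the sift-down stage each client traverses a path of length at most $\log m$, performing a constant number of reads, writes, and busy-waits per step; in the CC model a busy-wait on a single location that is written once by another process contributes $O(1)$ RMRs (one cache miss, an invalidation, and one more miss), so each client performs $O(\log m)$ RMRs and all clients together contribute $O(c\log m)$ RMRs.

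The delicate stage is the parallel insert. Each client walks an $O(\log m)$ path, doing constant shared work per non-split node plus a \texttt{split} operation at each split node it encounters. Let $t_0,t_1,\ldots,t_k$ denote the sizes of the client's InsertSet immediately before its successive splits, with $t_0\le c$; the $i$-th split has cost $\min(t_{i+1},\,t_i-t_{i+1})\le t_i-t_{i+1}$, so by telescoping the total per-client split cost is at most $t_0-t_k\le c$, giving a per-client Insert-phase bound of $O(c+\log m)$. For the total split cost, observe that the $|I|-1$ split nodes together with the $|I|$ target leaves form a binary tree $T$ in which every internal node $v$ carries weight $\min(|L(v)|,|R(v)|)$, where $L(v),R(v)$ are the sets of target leaves in its two subtrees. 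The standard charging argument---each target leaf is charged $1$ at each ancestor where it lies in the smaller subtree, and the subtree that contains it at least halves in size at every such step---yields $\sum_{v}\min(|L(v)|,|R(v)|)=O(c\log c)$.

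Summing the four stages gives a per-process RMR bound of $O(c+\log m)$ and a total of $O(c+\log m)+O(c\log m)+O(c\log m+c\log c)=O(c(\log c+\log m))$, as required. The main obstacle is precisely the $O(c\log c)$ total-split bound: summing the tight per-client $O(c)$ split bound over the $c$ clients only yields $O(c^2)$, so the sharper total must be obtained via the tree-summation argument rather than by naively summing per-process costs.
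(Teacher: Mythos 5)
Your proposal is correct and takes essentially the same route as the paper: the same stage-by-stage RMR accounting (combiner preparation, parallel sift-down, parallel insert), the same telescoping bound giving $O(c)$ split cost per client, and the same smaller-half idea for the total split cost. The only cosmetic difference is that you bound the total split work statically as $\sum_{v}\min(|L(v)|,|R(v)|)$ over the tree of split nodes and charge target leaves, whereas the paper charges each value, observing it is only ever moved into the smaller set and hence at most $\log c$ times; both yield the same $O(c\log c)$ term and the stated bounds.
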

\begin{proof}
Suppose that the batch consists of $a$ \texttt{ExtractMin} operations
and $b$ \texttt{Insert} operations.

The combiner splits requests into two sets $E$ and $I$ ($O(c)$ RMRs,
Lines~\ref{line:pq:combiner:split:1}-\ref{line:pq:combiner:split:2}).
Then it finds $a$ nodes with the smallest values ($O(a \log a)$ primitive steps, but
$O(a)$ RMRs, Lines~\ref{line:pq:combiner:best:1}-\ref{line:pq:combiner:best:2})
using Dijkstra-like algorithm.
After that, the combiner sets up ExtractMin requests, sets their status to \texttt{SIFT}
and pairs some Insert requests with ExtractMin requests
($O(a)$ RMRs, Lines~\ref{line:pq:combiner:em:setup:1}-\ref{line:pq:combiner:em:sift:2}).

The clients participate in ExtractMin phase.
At first, each client waits for its status to change ($1$ RMR).
Then the client performs at most $\log m$ iterations
of the loop (Line~\ref{line:pq:client:em:while}):
waits on the \textit{locked} fields of the children ($O(1)$ RMRs,
Lines~\ref{line:pq:client:em:wait:1}-\ref{line:pq:client:em:wait:2});
reads the values in the children ($O(1)$ RMRs,
Line~\ref{line:pq:client:em:load});
compares these values with the value at the node,
possibly, swap the values, lock the proper child and unlock the node ($O(1)$ RMRs,
Lines~\ref{line:pq:client:em:compare}-\ref{line:pq:client:em:end}).
When the client stops it changes the status ($1$ RMR, Line~\ref{line:pq:client:finished}).

The combiner waits for the change of the status of the clients ($O(a)$ RMRs,
Lines~\ref{line:pq:combiner:em:wait}-\ref{line:pq:combiner:em:end}).
Summing up, in ExtractMin phase each client performs $O(\log m)$ RMRs and
the combiner performs $O(a + \log m)$ RMRs, giving $O(c + c \cdot \log m)$ RMRs in total.

The combiner throws away completed Insert requests ($O(b)$ primitive steps and $0$ RMRs,
Line~\ref{line:pq:combiner:i:start}).
Then it finds the split nodes ($O(\log m + b)$ primitive steps, but $0$ RMRs,
Lines~\ref{line:pq:combiner:i:target:1}-\ref{line:pq:combiner:i:target:2}).
After that the combiner sorts arguments of remaining Insert requests,
sets their status to \texttt{SIFT} and
sets up the initial InsertSet
($O(b \cdot \log b)$ primitive steps, but $O(b)$ RMRs,
Line~\ref{line:pq:combiner:i:is:1} and Line~\ref{line:pq:combiner:i:sift:2}).

The clients participate in Insert phase.
At first, a client $t$ waits while the corresponding InsertSet is null ($1$ RMR,
Lines~\ref{line:pq:client:i:start}-\ref{line:pq:client:i:wait:2}).
Suppose that it reads the InsertSet $S$ and starts the traversal down.
The client performs at most $\log m$ iterations
of the loop (Line~\ref{line:pq:client:i:while}):
choose the smallest value ($O(1)$ RMRs,
Lines~\ref{line:pq:client:i:min:1}-\ref{line:pq:client:i:swap:2}),
find whether to split InsertSet ($O(1)$ RMRs,
Lines~\ref{line:pq:client:i:split:1}-\ref{line:pq:client:i:r0:2}),
split InsertSet (calculated below, Line~\ref{line:pq:client:i:split}) and
pass one InsertSet to another client ($O(1)$ RMRs,
Lines~\ref{line:pq:client:i:lr:1}-\ref{line:pq:client:i:lr:2}).
Now let us calculate the number of RMRs spent in Line~\ref{line:pq:client:i:split}.
Suppose that there are $k$ iterations of the loop and the size of $S$ at iteration $i$
is $s_i$.
At the $i$-th iteration split works in $O(\min(s_{i + 1}, s_i - s_{i + 1})) = O(s_i - s_{i + 1})$
primitive steps and RMRs.
Summing up through all iterations we get $O(s_1) = O(b)$ RMRs spent by $t$ in Line~\ref{line:pq:client:i:split}.
Finally, $t$ sets the value in the leaf ($O(1)$ RMRs, Lines~\ref{line:pq:client:i:last}-\ref{line:pq:client:i:end})
and changes the status ($O(1)$ RMRs, Line~\ref{line:pq:client:finished}).

The combiner waits for the change of the status of the clients ($O(b)$ RMRs,
Lines~\ref{line:pq:combiner:i:wait}-\ref{line:pq:combiner:i:end}).
Summing up, in Insert phase the clients and the combiner perform $O(b + \log m)$ RMRs each.
Consequently, the straightforward bound on the total number of RMRs is 
$O(c^2 + c \cdot \log m)$ RMRs.

To get the improved bound we carefully calculate
the total number of RMRs spent on the splits
of InsertSets in Line~\ref{line:pq:client:i:split}.
This number equals to the number of values that are moved
to newly created sets during the splits.
For simplicity, we assume that inserted values
are bigger than all the values in the priority queue and, thus,
each InsertSet contains only the newly inserted values.
This assumption does not affect the bound.
Consider now the inserted value $v$.
Suppose that $v$ was moved $k$ times and at the $i$-th time
it was moved during the split of InsertSet with size $s_i$.
Because $v$ is moved during split only to the set with the smaller size: $s_1 \geq 2 \cdot s_2 \geq
\ldots \geq 2^{k - 1} \cdot s_k$.
$k$ is less than $\log c$, because $s_1 \leq c$, and, thus, $v$ was moved no more than $\log c$ times.
This means, that in total during the splits of InsertSets
no more than $c \cdot \log c$ values are moved to new sets,
giving $O(c \cdot \log c)$ RMRs during the splits. 
This gives us a total bound of $O(c \cdot (\log c + \log m))$ RMRs during Insert phase.

To summarize, the combiner and the clients perform $O(c + \log m)$ RMRs each
and $O(c \cdot (\log c + \log m))$ RMRs in total.
\end{proof}

\begin{remark}
The above bounds also hold in DSM model for the version of the described algorithm.
For that we have to simply make spin-loops to loop on the local variable of processes.
In our algorithm the purpose of each spin-loops is to wake up some process.
At most places in our algorithm when we set the variable on which we spin we know
(or can deduce by a simple modification of the algorithm) which process is going to wake up.
For each spin-loop it is enough to create a separate variable in the memory of the target process.

The only two non-trivial spin-loops are in \texttt{CLIENT\_CODE} (Lines~\ref{line:pq:client:em:wait:1}-\ref{line:pq:client:em:wait:2})
where we do not know a process that is going to wake up.
To obviate this issue we expand each Node object with the pointer to process $proc$.
When the thread wants to \emph{sift-down}, first, it registers itself in $a[v].proc$ and, then,
checks $a[2v].locked$ and $a[2v+1].locked$.
If some of them are \texttt{true} then it spins on specifically created local variables:
on $notify_{2v}$ if $a[2v].locked$ is \texttt{true}, and on $notify_{2v + 1}$ if $a[2v+1].locked$ is \texttt{true}.
Then, the algorithm standardly performs swapping routine.
At the end, it unlocks the node, i.e., sets $a[v].locked$ to \texttt{false}, then, reads a process $a[v/2].proc$
and notifies it by setting its corresponding variable $notify_{v}$.
Note that the total number of \emph{notify} local variables
that is needed by each process is logarithmic from the size of the queue.

The described transformation (in reality, it is slightly more technical than described above) of our algorithm
provides an algorithm with the same bounds on RMRs but in DSM model.
\end{remark}

\end{document}